\newcommand{\todo}[1]{{\bf\color{red} Todo: #1}}
\newcommand{\old}[1]{{\color{teal} #1}}
\renewcommand{\old}[1]{ }
\newcommand{\commentout}[1]{ }
\newcommand{\ie}{i.\,e.\xspace}
\newcommand{\K}{\mathbb{K}}
\newcommand{\BB}{\mathfrak{B}}
\newcommand{\UL}{\underline{L}}
\newcommand{\US}{\underline{S}}
\newcommand{\UBB}{\underline{\BB}}
\DeclareMathOperator{\supprime}{supremum-prime}
\DeclareMathOperator{\infprime}{infimum-prime}
\newcommand{\contradiction}{(\text{\Lightning})}
\DeclareRobustCommand{\neswarrow}{%
	\mathrel{\text{\ooalign{$\swarrow$\cr$\nearrow$}}}%
}
\renewcommand{\intent}[1]{\text{intent}(#1)}
\let\intent\relax
\DeclareMathOperator{\intent}{intent}
\let\subset\subseteq
\let\cref\Cref
\begin{document}
\title{Dismantlable Intervals}
\title{Interval-Dismantling in Concept Lattices}
\title{Dismantling by Intervals}
\title{Interval-Dismantling for Lattices}
%
%
\author{Maximilian Felde\inst{1,2}\orcidID{0000-0002-6253-9007} \and\\
{Maren Koyda\inst{1,2}\orcidID{0000-0002-8903-6960}}}
%
%
\institute{%
  Knowledge \& Data Engineering Group,
  University of Kassel, Germany
  \and
  Interdisciplinary Research Center for Information System Design
  \mbox{University of Kassel, Germany}
  \email{felde@cs.uni-kassel.de, koyda@cs.uni-kassel.de}
}
%
\maketitle              
\begin{abstract}

  Dismantling allows for the removal of elements of a set, or in our case lattice, without disturbing the remaining structure.
  In this paper we have extended the notion of dismantling by single elements to the dismantling by intervals in a lattice.
  We utilize theory from Formal Concept Analysis (FCA) to show that lattices dismantled by intervals correspond to closed subrelations in the respective formal context, and that there exists a unique kernel with respect to dismantling by intervals.
  Furthermore, we show that dismantling intervals can be identified directly in the formal context utilizing a characterization via arrow relations and provide an algorithm to compute all dismantling intervals.

\keywords{Formal Concept Analysis \and Concept Lattice \and Dismantling Intervals \and Arrow Relations \and Closed Subrelations}

\end{abstract}

\section{Introduction}
\label{sec:introduction}

The dismantling of elements in ordered sets, in particular that of irreducible elements, was examined for example in \cite{duffus1976crowns,rival1974lattices,kelly1974crowns,baker1972partial,WALKER1984275}.
Notably, in \cite{duffus1976crowns} Duffus and Rival examined the dismantling of doubly irreducible elements and proved the uniqueness of the DI-core, \ie, the poset that remains when all doubly irreducible elements have been removed. 
In \cite{farley1993uniqueness} Farley gave a simpler proof relying on results for semimodular posets.

In Formal Concept Analysis (FCA) the removal of a doubly irreducible element in a concept lattice corresponds to the removal of a single incidence (cross) from its (clarified) formal context, cf. \cite[Prop. 53]{fca-book}.
The dismantling of a doubly irreducible element results in a complete sublattice of the original concept lattice.
In particular, this sublattice contains all but one of the original concepts.

In this paper, we extend the notion of dismantling from single elements to intervals in order to remove multiple (not necessarily irreducible) concepts at once while preserving the remaining concept lattice.
To this end, we make use of the one-to-one correspondence between closed subrelations of a formal context and the complete sublattices of its concept lattice~\cite{Wille87b}, and, more generally, of the one-to-one correspondence between closed subcontexts of a context and the sublattices of its concept lattice~\cite{koyda2021boolean}.

Extending dismantling to intervals, we call an interval $[u,v]\eqqcolon \US$ \emph{dismantling for a lattice $\UL$} if $v$ is infimum prime in the filter of $u$, $u$ is supremum prime in the ideal of $v$ and $u,v\not\in\{\bot,\top\}$.
Because infimum (supremum) prime implies infimum (supremum) irreducible, dismantling intervals that consist of a single element are precisely the doubly irreducible elements.
We show that an interval $\US$ is dismantling for $\UL$ if and only if the incidences of all concepts not in $\US$ form a closed subrelation.
Furthermore, we show that the core obtained by the iterative removal of dismantling intervals for a lattice is unique.
Where possible, we use the more general notion of an interval $\US$ being \emph{quasi-dismantling for a lattice}, which allows for $u,v\in\{\bot,\top\}$ and show that an interval $\US$ is quasi-dismantling for $\UL$ if and only if the objects, attributes and incidences of all concepts not in $\US$ form a closed subcontext.

Finally, we give a characterization of dismantling intervals via arrow relations on the context side and provide an algorithm to determine whether an interval is dismantling using this characterization.
Furthermore, the arrow relations provide a way to compute all dismantling intervals for a given context $\K$ without having to compute the concept lattice $\UBB(\K)$ itself.

This paper is structured as follows:
In \Cref{sec:fca-basics} we recollect all required notions from order theory and FCA.
In \Cref{sec:dism-interv} we introduce \emph{dismantling intervals}, develop their theory and prove the one-to-one correspondence between dismantling intervals and closed subrelations.
In \Cref{sec:dismantling-contexts-and-arrow-relations} we characterize dismantling intervals via arrow relations on the context side and give an algorithm to compute all dismantling intervals.
In \Cref{sec:conclusion} we give a brief conclusion.

\section{Basics}
\label{sec:fca-basics}

In the following we recall some notions from order theory, cf.~\cite{Birkhoff}, and formal concept analysis, cf.~\cite{fca-book}, and introduce some notations used in this work.

An \emph{ordered set} (or short \emph{order}) is a tuple $\underline{L}\coloneqq (L,\le)$ consisting of a set $L$ and a reflexive, transitive and antisymmetric relation $\le\subseteq L\times L$.
A subset $S$ of an order $\UL$ together with the same order relation restricted to $S$, \ie, $\US\coloneqq(\US, \leq\!\! \cap S^2)$, is called \emph{suborder} of $\UL$.
We denote this by $\US\le \UL$.
Specific suborders are generated by a single element $c\in L$:
The \emph{ideal} of $c$ is defined as $(c]:=\{x\in L\mid x\le c\}$;
The \emph{filter} of $c$ is defined as $[c):=\{x\in L\mid c\le x\}$.
For two elements $c,d \in L$ with $c\le d$ the \emph{interval} between $c$ and $d$ is given by $[c,d]:=\{x\in L\mid c\le x\le d\}$.
Further, we define the order $\UL\setminus\US\coloneqq (L\setminus S, (L\setminus S)^2 \cap \leq)$.
An element $c\in \UL$ of an ordered set is called \emph{upper bound} for a subset $T\subseteq L$ if $c\ge x$ for all $x\in T$.
If $c$ is the unique smallest upper bound of $T$, we call $c$ the \emph{supremum} of $T$.
Analogous, an element $c\in \UL$ of an ordered set is called \emph{lower bound} for a subset $T\subseteq L$ if $c\le x$ for all $x\in T$.
If $c$ is the unique greatest lower bound of $T$, we call $c$ the \emph{infimum} of $T$.

An ordered set $\UL$ is called a \emph{lattice} if for any two elements $c,d\in\UL$ there is an infimum $c\wedge d$ and a supremum $c\vee d$ in $\UL$.
It is called a \emph{complete lattice} if  all subsets $X\subseteq L$ have an infimum $\bigwedge X$ and a supremum $\bigvee X$ in $L$.
If $\UL$ is a lattice, we call $\US$ a \emph{sublattice} of $\UL$ if $(a,b\in S\Rightarrow a\vee b\in S \text{ and } a\wedge b\in S)$ holds.
We call $\US$ a \emph{complete sublattice} if for every subeset $T\subseteq S$ also $\bigvee T\in S$ and $\bigwedge T\in S$.
In finite lattices this requirement translates into top $(\top)$ and bottom $(\bot)$ element of $\UL$ being included in $S$.
For an element $c\in\UL$ we define $c_*\coloneqq\bigvee\{x\in \UL\mid x< c\}$ and 
$c^*\coloneqq\bigwedge\{x\in \UL\mid x> c\}$.
We call $c\in \UL$ \emph{supremum-irreducible} if $c$ has exactly one lower neighbor, meaning $c_*<c$.
An element $c\in \UL$ is called \emph{infimum-irreducible} if $c$ has exactly one upper neighbor, meaning $c^*>c$.
We call $c$ \emph{doubly-irreducible} if it is both, supremum-irreducible and infimum-irreducible.
An element $c\in\UL$ is called \emph{supremum-prime} if for all $x,y\in\UL$: $c \leq x\vee y \Rightarrow c\leq x$ or $c \leq y$.
Analogously, an element $c\in\UL$ is called \emph{infimum-prime} if for all $x,y\in\UL$: $x\wedge y \leq c \Rightarrow x\leq c$ or $y\leq c$.
It follows, an element that is supremum-prime (infimum-prime) is also supremum-irreducible (infimum-irreducible).

In the field of FCA one fundamental structure is the \emph{formal context}.
A \emph{formal context} $\K=\GMI$ is a triple consisting of a \emph{object set} $G$ an \emph{attribute set} $M$ and an \emph{incidence relation} $I\subseteq G\times M$.
On the powerset of the objects and the powerset of the attributes the \emph{derivation operators} are defined as follows:
$\cdot': \mathcal{P}(G)\rightarrow \mathcal{P}(M), A\mapsto A'\coloneqq \{m\in M\mid \forall g\in A: (g,m)\in I\}$ and 
$\cdot': \mathcal{P}(M)\rightarrow \mathcal{P}(G), B\mapsto B'\coloneqq \{g\in G\mid \forall m\in B: (g,m)\in I\}$.
A pair $c=(A,B)$ with $A\subseteq G$ and $B\subseteq M$ is called \emph{formal concept} of the formal context $\K=\GMI$ if $A'=B$ and $B'=A$ hold.
Then, $A$ is called the \emph{extent} and $B$ is called the \emph{intent} of $c$.
For an object $g\in G$ the concept $\gamma g \coloneqq (g'',g')$ is called \emph{object concept} of $g$.
Analogous, the concept $\mu m \coloneqq (m',m'')$ is called \emph{attribute concept} of the attribute $m\in M$.
The set of all formal concepts of a formal context $\K$ is denoted by $\BB(\K)$.
Together with the order relation $\le$, where $(A_1,B_1)\le(A_2,B_2)$ if $A_1\subseteq A_2$, $\BB(\K)$ forms a lattice, the \emph{concept lattice} $\underline{\BB}(\K)$.

An object $g\in G$ is called \emph{irreducible}, if there is no set of objects $X\subset G$ with $g\not\in X$ and $X'=g'$.
Otherwise, $g$ is called \emph{reducible}.
Analogous, an attribute $m\in M$ is called \emph{irreducible}, if there is no set of attributes $X\subset M$ with $m\not\in X$ and $X'=m'$. 
Otherwise $m$ is called \emph{reducible}.
If all objects and all attributes of a formal context $\K$ are irreducible, $\K$ is called \emph{reduced}. 
The concept lattice of a reduced formal context is isomorphic to the concept lattice of the context with additional reducible objects or attributes.
In a reduced context $\K=\GMI$ the object concepts $\gamma g$ are the supremum-irreducible concepts and the attribute-concepts $\mu m$ are the infimum-irreducible concepts in $\UBB(\K)$.

For a formal context $\K=\GMI$ a \emph{subrelation} of $I$ is a subset $J\subseteq I$.
$J$ is called \emph{closed subrelation} if all formal concepts of $(G,M,J)$ are formal concepts of $\K$ as well, \ie, if $\BB(G,M,J)\subseteq \BB(\K)$. 
A formal context $\mathbb{S}=\K|_{H,N}\coloneqq(H,N,J)$ with $H\subseteq G$, $N\subseteq M$ and $J=I\cap (H\times N)$ is called \emph{subcontext} of $\K=(G,M,I)$ and denoted by $\mathbb{S}\le\K$.
If instead $J\subseteq I\cap (H\times N)$ and all formal concepts of $\mathbb{S}$ are also formal concepts of $\K$, $\mathbb{S}$ is called \emph{closed subcontext} of $\K$.
The closed subrelations of a formal context $\K$ are in a one-to-one correspondence to the complete sublattices of $\underline{\BB}(\K)$ \cite{Wille87b}:
If $J$ is a closed relation of $\GMI$, then $\UBB(G,M,J)$ is a complete sublattice of $\UBB\GMI$ with $J=\bigcup\{A\times B\ |\ (A,B)\in\UBB(G,M,J)\}$.
Conversely, for every complete sublattice $U$ of $\UBB\GMI$ the relation $J\coloneqq \bigcup\{A\times B\ |\  (A,B)\in U \}$ is closed and $\UBB(G,M,J)=U$.
Similarly, the closed subcontexts of a finite formal context $\K$ are in a one-to-one correspondence to the sublattices of $\underline{\BB}(\K)$ \cite{koyda2021boolean}:
If $(H,N,J)$ is a closed subcontext of $\GMI$, then $\UBB(H,N,J)$ is a sublattice of $\UBB\GMI$ with $H=\bigcup\{A\ |\ (A,B)\in\UBB(H,N,J)\}$, $N=\bigcup\{B\ |\ (A,B)\in\UBB(H,N,J)\}$ and $J=\bigcup\{A\times B\ |\ (A,B)\in\UBB(H,N,J)\}$.
Conversely, for every sublattice $U$ of $\UBB\GMI$ the context $(H,N,J)=(\bigcup_{(A,B)\in U} A, \bigcup_{(A,B)\in U} B, \bigcup_{(A,B)\in U} A\times B)$ is a closed subcontext of $\GMI$ and $\UBB(H,N,J)=U$.
In the following all sets are considered finite.

\section{Dismantling Intervals for a Lattice}
\label{sec:dism-interv}

To identify the intervals that can be removed from a (concept) lattice without disturbing the remaining structure, we introduce the notions of \emph{dismantling} and \emph{quasi-dismantling intervals for a lattice}, extending the usual notion of dismantling single elements.

\begin{definition}\label{def:dismanting}
	Let $\UL$ be a lattice and 
	$[u,v]=\US\leq\UL$ an interval of $\UL$.
	We call $\US$ \emph{quasi-dismantling} for $\UL$ if $u$ is $\supprime$ in $(v]$ and $v$ is $\infprime$ in $[u)$.
  If $u\not= \bot$ and $v\not= \top$ we call $\US$ \emph{dismantling} for $\UL$.
\end{definition}

See \cref{fig:dismantlable-viz} for a visualization and an example for \cref{def:dismanting}.

On the context side the removal of a set of concepts $S$ corresponds to the removal of all incidences (crosses) that only belong to concepts in the respective interval.
We call the remaining context (objects, attributes, incidences) the \emph{S-removed} context (objects, attributes, incidences).

\begin{figure}[t]
  \centering
  	\begin{minipage}{0.4\textwidth}
    \includeinkscape[width=0.55\columnwidth]{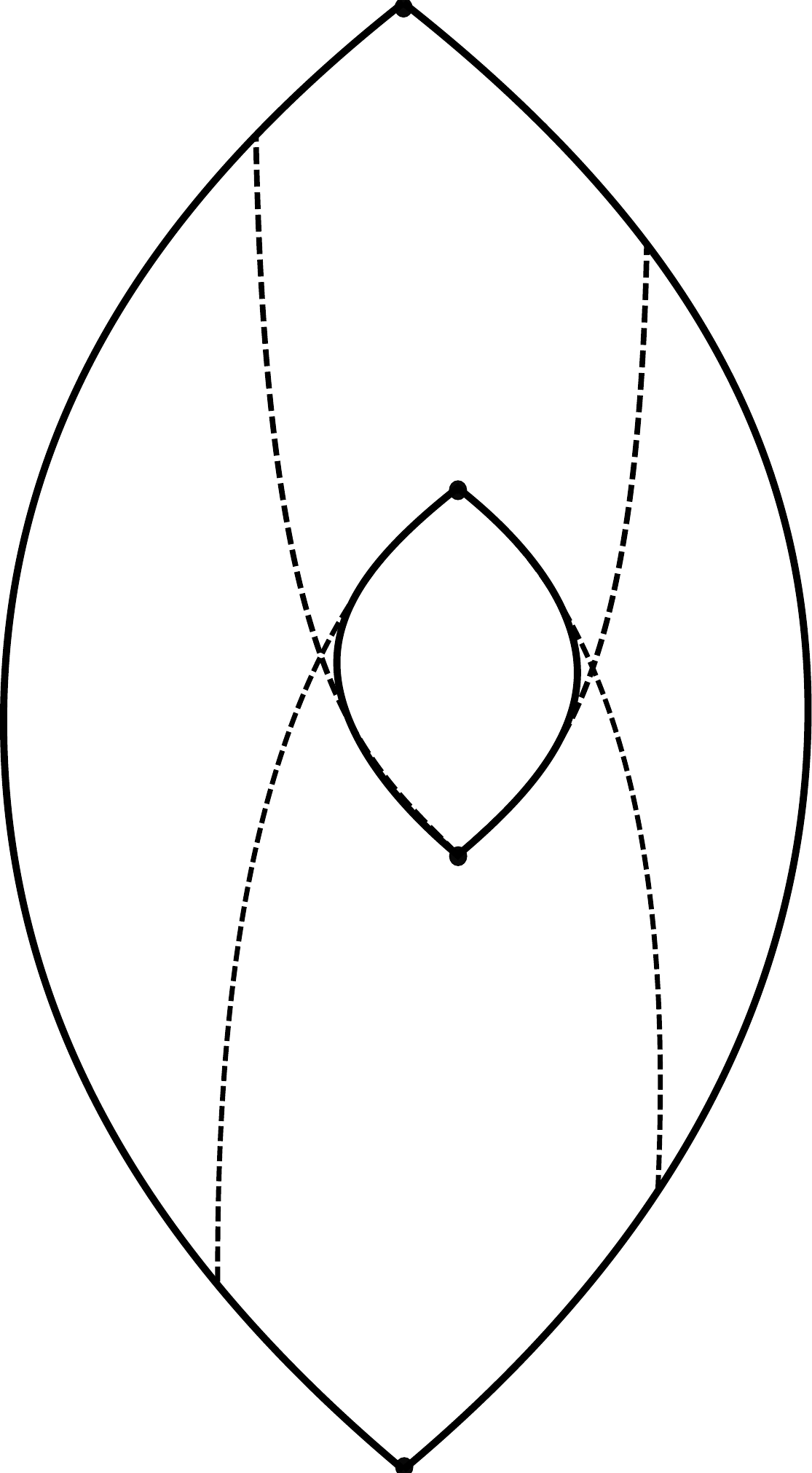}
  \end{minipage}
  \begin{minipage}{0.4\textwidth}
		\centering
      	{\unitlength 0.6mm
		\begin{picture}(80,60)%
		\put(0,0){%
			\begin{diagram}{80}{60}
			\Node{ 1}{30}{ 0}
			\Node{ 2}{10}{10}
			\Node{ 3}{30}{20}
      \Node{ 4}{20}{30}
      \Node{ 5}{ 0}{30}
      \Node{ 6}{40}{30}     
      \Node{ 7}{80}{30}     
      \Node{ 8}{60}{40}     
      \Node{ 9}{70}{50}
      \Node{10}{30}{50}     
      \Node{11}{50}{60}   
      \Edge{1}{2}
      \Edge{1}{3}
      \Edge{1}{7}
      \Edge{2}{4}
      \Edge{2}{5}
      \Edge{3}{4}
      \Edge{3}{6}
      \Edge{4}{10}
      \Edge{5}{10}     
      \Edge{6}{10}
      \Edge{6}{8}
      \Edge{7}{8}
      \Edge{7}{9}
      \Edge{8}{11}      
      \Edge{9}{11}
      \Edge{10}{11}
      \leftObjbox{2}{3}{1}{3}
      \leftObjbox{5}{3}{1}{5}
      \rightObjbox{3}{3}{1}{4}
      \rightObjbox{7}{3}{1}{2}
      \rightObjbox{6}{3}{1}{6}      
      \rightObjbox{9}{3}{1}{1}
			\leftAttbox{5}{3}{1}{d}
	 	  \leftAttbox{10}{3}{1}{b}
			\rightAttbox{8}{3}{1}{e}
	 	  \rightAttbox{4}{3}{1}{c}
	 	  \rightAttbox{9}{3}{1}{a}
    \end{diagram}}
			\put(30,20){\ColorNode{gray}}
			\put(40,30){\ColorNode{gray}}
			\put(60,40){\ColorNode{gray}}
		\end{picture}}	
	\end{minipage}
  \caption{Visualization for the definition of an interval $\US=[u,v]$ dismantling for $\UL$ (left). The colored elements are an example for an interval dismantling for the lattice (right).}
    \label{fig:dismantlable-viz}
\end{figure} 

\begin{definition}
	Let $\K=\GMI$ be a formal context and $S\subseteq \BB(\K)$ a set of formal concepts.
	We call
  \begin{itemize}
  \item 
	\(
	I_S \coloneqq I\setminus \left(\bigcup_{(A,B)\in S}A\times B \setminus \bigcup_{(A,B)\in\BB(\K)\setminus S}A\times B \right) 
	\)
	\emph{S-removed incidences},
\item 
	\(
	G_S \coloneqq G\setminus \left(\bigcup_{(A,B)\in S}A \setminus \bigcup_{(A,B)\in\BB(\K)\setminus S} A\right) 
	\)
	\emph{S-removed objects},
\item 
	\(
	M_S \coloneqq M\setminus \left(\bigcup_{(A,B)\in S}B \setminus \bigcup_{(A,B)\in\BB(\K)\setminus S}B \right) 
	\)
	\emph{S-removed attributes}.
    \end{itemize}
	In the following we let $\K_S\coloneqq (G_S,M_S,I_S)$ be the \emph{S-removed context} for $\K$.
\end{definition}

There is a simpler representation for the S-removed objects, attributes and incidences:

\begin{lemma}
		Let $\K=\GMI$ be a formal context and $S\subseteq \BB(\K)$ a set of formal concepts, then
		$I_S= \bigcup_{(A,B)\in\BB(\K)\setminus S} A\times B$,
		$G_S= \bigcup_{(A,B)\in\BB(\K)\setminus S} A$ and
		$M_S= \bigcup_{(A,B)\in\BB(\K)\setminus S} B.$
	\end{lemma}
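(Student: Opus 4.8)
The plan is to reduce all three identities to one elementary fact about set operations, applied on top of the standard observation that a concept lattice ``covers'' its underlying context. First I would record the covering identities
\[
  I = \bigcup_{(A,B)\in\BB(\K)} A\times B, \qquad
  G = \bigcup_{(A,B)\in\BB(\K)} A, \qquad
  M = \bigcup_{(A,B)\in\BB(\K)} B .
\]
For $G$ and $M$ these are immediate: every object $g$ lies in the extent of its object concept $\gamma g$, and every attribute $m$ lies in the intent of its attribute concept $\mu m$. For $I$ one uses that $(g,m)\in I$ forces $g\in g''$ and $m\in g'$, so $(g,m)$ belongs to $A\times B$ for $(A,B)=\gamma g\in\BB(\K)$; conversely $A\times B\subseteq I$ for every concept $(A,B)$, since $B=A'$ means $(h,m)\in I$ for all $h\in A$ and $m\in B$.

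Next I would isolate the purely set-theoretic lemma doing the real work: if $X$, $P$, $Q$ are sets with $X = P\cup Q$, then $X\setminus(Q\setminus P) = P$. This is a one-line membership check: an element of $P$ survives the removal because it is not in $Q\setminus P$; and an element of $X$ not in $P$ must lie in $Q$, hence in $Q\setminus P$, hence is removed. Note that $P\subseteq X$ holds automatically, so no side condition is needed.

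Finally I would combine the two ingredients. For the incidences, put $X=I$, $P=\bigcup_{(A,B)\in\BB(\K)\setminus S}A\times B$ and $Q=\bigcup_{(A,B)\in S}A\times B$; the covering identity gives $X=P\cup Q$, so $I_S = I\setminus(Q\setminus P)=P$, which is exactly the asserted formula. The arguments for $G_S$ and $M_S$ are verbatim the same, replacing $A\times B$ by $A$, respectively by $B$, and using the corresponding covering identity for $G$, respectively $M$.

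I do not expect a genuine obstacle here: the statement is essentially Boolean algebra of sets once the covering identities are in place. The only point one must not overlook is precisely that $I=\bigcup_{(A,B)\in\BB(\K)}A\times B$ (and its analogues for $G$ and $M$), because the definitions of $I_S$, $G_S$, $M_S$ are written as ``$I$ (resp.\ $G$, $M$) minus something'' rather than directly in terms of the concepts outside $S$; bridging that gap is the whole content of the lemma.
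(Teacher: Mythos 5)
Your proposal is correct and follows essentially the same route as the paper: rewrite $I$ (resp.\ $G$, $M$) via the covering identity $I=\bigcup_{(A,B)\in\BB(\K)}A\times B$ and then apply elementary Boolean set algebra, treating the other two cases verbatim. The only cosmetic difference is that you isolate the identity $X=P\cup Q\Rightarrow X\setminus(Q\setminus P)=P$ as a named lemma, whereas the paper carries out the equivalent computation with complements and intersections inline.
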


\begin{proof}
	We show the proof for the S-removed incidences. The proofs for the other two sets are analogous.

\noindent$I\setminus \left(\bigcup_{(A,B)\in S}A\times B \setminus \bigcup_{(A,B)\in\BB(\K)\setminus S}A\times B \right)$\\
    $= \bigcup_{(A,B)\in\BB(\K)}A\times B \setminus \left(\bigcup_{(A,B)\in S}A\times B \setminus \bigcup_{(A,B)\in\BB(\K)\setminus S}A\times B \right)
    \\=  \left(\bigcup_{(A,B)\in\BB(\K)}A\times B \cap \left(\bigcup_{(A,B)\in S}A\times B\right)^{\! c}\   \right)$\\%
  \phantom{= }$\cup \left(\bigcup_{(A,B)\in\BB(\K)}A\times B \cap \bigcup_{(A,B)\in\BB(\K)\setminus S}A\times B  \right)$\\
$= \bigcup_{(A,B)\in\BB(\K)\setminus S}A\times B$     
  \qed
\end{proof}

If we consider an interval $\US$, we see that $\US$ being quasi-dismantling corresponds to obtaining a closed subcontext on S-removal.
More precisely, for an interval $\US\leq \UBB(\K)$ the S-removed context $\K_S$ for $\K$ is a closed subcontext if and only if $\US$ is quasi-dismantling for $\UBB(\K)$.

  \begin{lemma}
    \label{lem:closed_subrelation}
	Let $\K=\GMI$ be a formal context and $\UBB(\K)$ its corresponding concept lattice.
	Let $\US=[u,v]\leq\UBB(\K)$ be an interval.
	Then, $\US$ is quasi-dismantling for $\UBB(\K)$ iff $\K_S= (G_S,M_S,I_S)$ is a closed subcontext of $\K$.
\end{lemma}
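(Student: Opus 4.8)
The plan is to connect the set-up directly to the known correspondence between closed subcontexts and sublattices stated in the Basics section. Set $T \coloneqq \BB(\K) \setminus S$, the set of concepts that survive removal. By the preceding simpler-representation lemma we have $G_S = \bigcup_{(A,B)\in T} A$, $M_S = \bigcup_{(A,B)\in T} B$, and $I_S = \bigcup_{(A,B)\in T} A\times B$; note also $I_S = I \cap (G_S \times M_S)$ is not assumed — rather $\K_S$ is a subcontext-with-smaller-relation in the sense of the "closed subcontext" definition. By the cited one-to-one correspondence (from \cite{koyda2021boolean}), $\K_S$ is a closed subcontext of $\K$ if and only if $T$ is the concept set of a sublattice of $\UBB(\K)$, i.e. if and only if $T$ is closed under $\wedge$ and $\vee$. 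So the whole lemma reduces to the purely order-theoretic claim: for an interval $\US = [u,v]$, the complement $T = \BB(\K)\setminus \US$ is a sublattice if and only if $u$ is supremum-prime in $(v]$ and $v$ is infimum-prime in $[u)$.

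For that order-theoretic equivalence I would argue both directions by contraposition on the sublattice condition. Suppose $T$ is not closed under suprema: there are $x,y \in T$ with $x\vee y \in \US$, so $u \le x\vee y \le v$. Since $x \vee y \le v$ and $x,y \le x\vee y \le v$, both $x,y$ lie in the ideal $(v]$. Because $x,y\notin\US$ while $u\le x\vee y$, neither $x$ nor $y$ is $\ge u$ (if, say, $u\le x\le x\vee y\le v$, then $x\in\US$, contradiction) — hence $u \le x\vee y$ but $u\not\le x$ and $u\not\le y$, which is exactly the failure of $u$ being supremum-prime in $(v]$. The converse: if $u$ is not supremum-prime in $(v]$, pick $a,b\le v$ witnessing $u\le a\vee b$, $u\not\le a$, $u\not\le b$; then $a,b\notin\US$ (each fails $\ge u$), but $u\le a\vee b\le v$ so $a\vee b\in\US$, so $T$ is not $\vee$-closed. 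The argument for infima versus $v$ being infimum-prime in $[u)$ is order-dual. Combining: $T$ is a sublattice iff both primeness conditions hold.

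One subtlety I would make explicit: a sublattice must be closed under both $\wedge$ and $\vee$, and I claim the "supremum part" of the sublattice condition is equivalent to $u$ supremum-prime in $(v]$ \emph{alone}, and the "infimum part" to $v$ infimum-prime in $[u)$ alone; the two conditions do not interact because whenever $x\vee y\in\US$ the join lies below $v$ and the relevant elements land in $(v]$, and dually for meets. I would also remark that one must check $u,v$ themselves behave correctly — but since $u,v\in\US$, they are simply not in $T$, and there is nothing to verify about them beyond the primeness conditions; in the quasi-dismantling case $u=\bot$ or $v=\top$ is allowed, and the argument above never used $u\ne\bot$ or $v\ne\top$, so it goes through verbatim.

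The main obstacle I anticipate is not any single deep step but getting the bookkeeping with the $S$-removed sets exactly right so that the invocation of the \cite{koyda2021boolean} correspondence is legitimate: one must verify that $\K_S$ as defined really is of the form $(\bigcup_{(A,B)\in U}A,\ \bigcup_{(A,B)\in U}B,\ \bigcup_{(A,B)\in U}A\times B)$ for $U = T$, which is precisely the content of the preceding simpler-representation lemma, and then that "closed subcontext" in the correspondence theorem matches the definition used here. Once that alignment is in place, the order-theoretic equivalence above is short and the lemma follows immediately.
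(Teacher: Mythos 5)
Your reduction is attractive, and its order-theoretic half is sound: the equivalence ``$T\coloneqq\BB(\K)\setminus S$ is closed under $\vee$ and $\wedge$ iff $u$ is supremum-prime in $(v]$ and $v$ is infimum-prime in $[u)$'' is exactly right, and your contraposition argument for it is essentially the computation the paper performs inside both directions of its own proof. The direction ``quasi-dismantling $\Rightarrow$ closed subcontext'' also goes through as you describe: if $T$ is a sublattice, the cited correspondence applied to $T$ produces the closed subcontext $\left(\bigcup_{(A,B)\in T}A,\ \bigcup_{(A,B)\in T}B,\ \bigcup_{(A,B)\in T}A\times B\right)$, which is $\K_S$ by the preceding representation lemma.

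The gap is in the converse of your step 1. From ``$\K_S$ is a closed subcontext'' the correspondence only tells you that $\UBB(\K_S)$ is a sublattice of $\UBB(\K)$ --- not that $T$ is. One checks easily that $T\subseteq\BB(\K_S)$ always holds, but the inclusion can be proper: a concept $(A,B)\in S$ survives as a concept of $\K_S$ whenever $A\times B\subseteq I_S$, i.e.\ whenever each of its incidences is covered by some concept outside $S$. Your proposal silently identifies $\BB(\K_S)$ with $T$; the ``subtlety'' you flag (that $\K_S$ has the right union form) is handled by the preceding lemma and is not the issue. Concretely, take $S=\{w\}=[w,w]$ for a doubly reducible concept $w$ in a reduced context, say $x\vee y=w=p\wedge q$ with $\gamma$- and $\mu$-labels only on the other elements: every cross of $w$ lies in the rectangle of some $\gamma g$ with $g$ in the extent of $w$, so $I_S=I$ and $\K_S=\K$ is trivially a closed subcontext, yet $[w,w]$ is not quasi-dismantling. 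So this direction cannot be closed as written; one would have to additionally establish $\BB(\K_S)=T$ (false in general) or work with the stronger hypothesis ``$\K_S$ is a closed subcontext with $\BB(\K_S)=\BB(\K)\setminus S$''. For what it is worth, the paper's own proof of this direction makes the same unjustified leap --- it asserts ``$z\in\US$, therefore $z\notin\BB(\K_S)$'' --- so you should treat this as an issue with the statement itself as much as with your argument.
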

\begin{proof}
	"$\Rightarrow$": We show the contraposition:
	Assume $(G_S,M_S,I_S)$ is no closed subcontext.
	By definition holds $G_S\subseteq G$, $M_S\subseteq M$ and $I_S\subseteq I$.
	Then there exists some $c\in\UBB(\K_S)$ such that $c\not\in\UBB(\K)$.
	Since $\UBB(\K_S)$ is a lattice generated from $\BB(\K)\setminus S$ there exist $x,y\in \BB(\K)\setminus S$ such that $x\vee y = c$ or $x \wedge y = c$ in $\UBB(\K_S)$.
	In case $x\vee y = c$:
	Since $\UBB(\K)$ is a lattice it follows that there exists some $z\in \BB(\K)$ with $z\not\in\BB(\K_S)$ and $z=x\vee y$.
	Thus, $z\in S = [u,v]$ and therefore $z\geq u$ in $\UBB(\K)$.
	Because $x,y\not\in S$ we have $x,y\not\geq u$.
	Hence, $u$ is not $\supprime$ in $(v]$ and $S$ is not quasi-dismantling.
	The case $x\wedge y = c$ is analogous.
  
	"$\Leftarrow$": We show the contraposition: 
	Assume $\US$ is not quasi-dismantling.
	Then $u$ is not supremum-prime in $(v]$ or $v$ is not infimum-prime in $[u)$.
	In case that $u$ is not supremum-prime in $(v]$:
	There exist $x,y\in (v]$ such that $z:=x \vee y\geq u$, $x\not\geq u$ and $y\not\geq u$.
	Thus, $x,y\not\in \US$ and $z\in \US$.
	Therefore, $z\not\in \BB(\K_S)$.
	There is some supremum $c=x\vee y$ in $\BB(\K_S)$.
	Because the intent of $c$ is the intent of $z$ by~\cite[Thm. 3]{fca-book} we have $c\not\in \BB(\K)$.
	Thus, $(G_S,M_S,I_S)$ is no closed subcontext.
	The case that $v$ is not infimum-prime in $[u)$ is analogous.
	\qed
\end{proof}

In particular, for a dismantling interval $\US$ we have $G_S=G$ and $M_S=M$ (because $\top,\bot\in\BB(\K)\setminus S$) and therefore the correspondence in this case is to closed subrelations.

The removal of a quasi-dismantling interval leaves the remaining lattice intact with respect to supremum and infimum:

\begin{lemma}	\label{lem:lattice}
	Let $\UL$ be a lattice and $\US\le\UL$ an interval.
	If $\US$ is quasi-dismantling for $\UL$, then $\UL\setminus\US$ is a lattice.
	In particular, $\UL\setminus\US$ is a sublattice of $\UL$.
\end{lemma}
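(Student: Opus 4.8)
The plan is to reduce everything to a single structural fact: that $L\setminus S$ is closed under the binary join $\vee$ and meet $\wedge$ of $\UL$. Once this is in hand, the suborder $\UL\setminus\US$ is automatically a lattice whose operations are the restrictions of those of $\UL$ — which is exactly the assertion that it is a sublattice — so the two halves of the statement collapse into one argument. I would therefore present the closure property first and then spend one sentence on why it implies the lemma.

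For closure under $\vee$, I would take arbitrary $a,b\in L\setminus S$, put $c\coloneqq a\vee b$ computed in $\UL$, and argue by contradiction that $c\notin S=[u,v]$. If $c\in[u,v]$, then from $a,b\le a\vee b=c\le v$ we get $a,b\in(v]$, and from $u\le c=a\vee b$ together with the hypothesis that $u$ is $\supprime$ in $(v]$ we obtain $u\le a$ or $u\le b$; say $u\le a$. Then $u\le a\le v$ gives $a\in[u,v]=S$, contradicting $a\notin S$. The meet case is dual: put $d\coloneqq a\wedge b$; if $d\in[u,v]$ then $u\le a\wedge b$ forces $a,b\in[u)$, and $a\wedge b\le v$ with $v$ being $\infprime$ in $[u)$ gives $a\le v$ or $b\le v$, say $a\le v$, so that $u\le a\le v$ again puts $a$ into $S$ — a contradiction. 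Hence $a\vee b,\,a\wedge b\in L\setminus S$.

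To finish, I would observe that for $a,b\in L\setminus S$ the element $a\vee b$ lies in $L\setminus S$ and, being the least upper bound of $\{a,b\}$ in $\UL$, is also their least upper bound in the suborder (any upper bound of $\{a,b\}$ inside $L\setminus S$ is one inside $\UL$, hence above $a\vee b$), and symmetrically $a\wedge b$ is their greatest lower bound in the suborder; thus $\UL\setminus\US$ is a lattice whose operations agree with the restrictions of $\vee$ and $\wedge$, i.e.\ a sublattice of $\UL$. I do not expect a genuine obstacle here; the only points requiring a little care are (i) matching the ``prime in the sub-poset $(v]$ (resp.\ $[u)$)'' hypothesis with the concrete elements $a,b$, which is supplied precisely by the derivations $a,b\le v$ (resp.\ $u\le a,b$) above, and (ii) the degenerate case $\US=\UL$ — available when $u=\bot$ and $v=\top$ — in which $\UL\setminus\US$ is empty; outside this case the argument is complete, and it can be excluded by the standing assumption that a genuine interval is removed.
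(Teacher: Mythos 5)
Your proposal is correct and takes essentially the same route as the paper: the paper's proof is a one-line assertion that quasi-dismantling forces $L\setminus S$ to be closed under $\vee$ and $\wedge$, and your argument is precisely the detailed justification of that closure (using $a,b\le c\le v$ to place $a,b$ in $(v]$ and then invoking supremum-primality of $u$, dually for meets). Your extra care about the degenerate case $\US=\UL$ and about joins in the suborder agreeing with those in $\UL$ goes beyond what the paper writes down, but does not change the approach.
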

\begin{proof}
	Let $x,y, z\in \UL$ with $z=x\vee y$.
	Because $\US$ is quasi-dismantling if $x,y\not\in\US$ then $z\not\in\US$.
	Analogously for $z=x\wedge y$.	
	\qed
\end{proof}


\begin{corollary}
  If $\US$ is dismantling for $\UL$ then $\UL\setminus\US$ is a complete sublattice of $\UL$.
\end{corollary}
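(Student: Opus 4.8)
The plan is to reduce the corollary to \Cref{lem:lattice} together with the remark from \Cref{sec:fca-basics} that, in a finite lattice, a sublattice is a complete sublattice exactly when it contains the bottom and the top element.

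First I would note that a dismantling interval is in particular quasi-dismantling: comparing the two notions in \Cref{def:dismanting}, the only additional requirement imposed by ``dismantling'' is $u\neq\bot$ and $v\neq\top$. Hence \Cref{lem:lattice} applies verbatim and already yields that $\UL\setminus\US$ is a sublattice of $\UL$, i.e.\ it is closed under the binary joins and meets of $\UL$.

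Next I would check that neither $\bot$ nor $\top$ is removed. Since $u\neq\bot$, we have $\bot<u$, so $\bot\notin[u,v]=S$; dually, $v\neq\top$ gives $\top\notin S$. Therefore $\bot,\top\in L\setminus S$.

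Finally I would combine the two observations: for any subset $T\subseteq L\setminus S$, finiteness of $L$ makes $T$ finite, so $\bigvee T$ and $\bigwedge T$ taken in $\UL$ are obtained by iterated binary joins and meets (which stay inside the sublattice $\UL\setminus\US$), while the remaining boundary cases $\bigvee\emptyset=\bot$ and $\bigwedge\emptyset=\top$ lie in $L\setminus S$ by the previous paragraph. Hence every subset of $\UL\setminus\US$ has its $\UL$-supremum and $\UL$-infimum again in $\UL\setminus\US$, which is precisely the statement that $\UL\setminus\US$ is a complete sublattice of $\UL$. I do not expect a genuine obstacle here; the only point deserving a word of care is the empty-subset case, and this is exactly where the strengthened hypotheses $u\neq\bot$ and $v\neq\top$ (rather than mere quasi-dismantling) are used.
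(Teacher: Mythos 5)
Your proposal is correct and follows exactly the route the paper intends: the corollary is an immediate consequence of \cref{lem:lattice} (dismantling implies quasi-dismantling, hence a sublattice) combined with the remark in \cref{sec:fca-basics} that in a finite lattice a sublattice containing $\bot$ and $\top$ is complete, which is guaranteed by $u\neq\bot$ and $v\neq\top$. Your explicit handling of the empty-subset case is a nice touch the paper leaves implicit, but it is not a different argument.
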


Combining the previous lemmas, it follows for an interval $\US$ which is quasi-dismantling in a lattice $\UBB(\K)$ that the removal of $\US$ from $\UBB(\K)$ is isomorphic to the concept lattice of the S-removed context $\K_S$.

\begin{theorem}\label{thm:main}
	Let $\K=\GMI$ be a formal context and $\UBB(\K)$ its corresponding concept lattice.
	Let $\US\leq\UBB(\K)$ be an interval.
	If $\US$ is quasi-dismantling for $\UL$, then 
	\[ \UBB(\K)\setminus\US = \UBB(\K_S).
	\]
\end{theorem}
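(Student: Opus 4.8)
The plan is to invoke the one-to-one correspondence between sublattices and closed subcontexts of \cite{koyda2021boolean} in its \emph{converse} direction, applied to the ordered set $U_0 \coloneqq \UBB(\K)\setminus\US$, and then to recognise the associated subcontext as $\K_S$.

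First I would note that, since $\US$ is quasi-dismantling for $\UBB(\K)$, the second assertion of \cref{lem:lattice} guarantees that $U_0$ is a sublattice of $\UBB(\K)$, and its underlying set is exactly $\BB(\K)\setminus S$. Applying the converse direction of the correspondence \cite{koyda2021boolean} to $U_0$, the context
\[
  \K' \coloneqq \Bigl( \bigcup_{(A,B)\in U_0} A,\ \bigcup_{(A,B)\in U_0} B,\ \bigcup_{(A,B)\in U_0} A\times B \Bigr)
\]
is a closed subcontext of $\K$ and satisfies $\UBB(\K') = U_0$.

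It then only remains to identify $\K'$ with $\K_S$. Because the underlying set of $U_0$ is $\BB(\K)\setminus S$, the simpler representation of the $S$-removed objects, attributes and incidences (the unnamed lemma preceding \cref{lem:closed_subrelation}) gives $\bigcup_{(A,B)\in U_0}A = G_S$, $\bigcup_{(A,B)\in U_0}B = M_S$ and $\bigcup_{(A,B)\in U_0}A\times B = I_S$, i.e.\ $\K' = (G_S,M_S,I_S) = \K_S$. Hence $\UBB(\K_S) = \UBB(\K') = U_0 = \UBB(\K)\setminus\US$, which is the claim; as a by-product this also re-establishes \cref{lem:closed_subrelation}. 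One minor point to watch is that the asserted identity is a genuine equality of ordered sets: this is unproblematic since the concepts of $\K_S$ are concepts of $\K$ (closedness) and the extent order among them coincides with the order inherited from $\UBB(\K)$.

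I expect that the only real decision is to use the correspondence in this direction rather than the other. If one instead started from \cref{lem:closed_subrelation} (so that $\K_S$ is a closed subcontext, hence $\BB(\K_S)$ is a sublattice of $\UBB(\K)$ contained in $\BB(\K)$), one would still have to argue separately that $\BB(\K_S) = \BB(\K)\setminus S$: the inclusion $\supseteq$ is easy, since any $(A,B)\in\BB(\K)\setminus S$ satisfies $A\subseteq G_S$, $B\subseteq M_S$, $A\times B\subseteq I_S$, which forces $(A,B)\in\BB(\K_S)$, but showing $\BB(\K_S)\cap S = \emptyset$ directly is awkward. Routing through the sublattice $U_0$ avoids this, because the correspondence hands us $\UBB(\K') = U_0$ for free.
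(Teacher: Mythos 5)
Your proposal is correct, and it closes the argument by a route that differs from the paper's in its second half. The paper also starts from \cref{lem:lattice} to get that $\UBB(\K)\setminus\US$ is a sublattice, but then invokes \cref{lem:closed_subrelation} to conclude that $\UBB(\K_S)$ is a sublattice of $\UBB(\K)$ and finishes by asserting that both lattices ``contain exactly the concepts of $\BB(\K)$ that are not included in $\US$'' --- which, as you yourself observe at the end of your proposal, is precisely the step that is nontrivial for $\UBB(\K_S)$ (the inclusion $\BB(\K)\setminus S\subseteq\BB(\K_S)$ is easy, but $\BB(\K_S)\cap S=\emptyset$ is not argued in the paper). You instead apply the converse direction of the sublattice/closed-subcontext correspondence of \cite{koyda2021boolean} to $U_0=\UBB(\K)\setminus\US$, which hands you a closed subcontext $\K'$ with $\UBB(\K')=U_0$ outright, and then identify $\K'=\K_S$ via the representation lemma $G_S=\bigcup A$, $M_S=\bigcup B$, $I_S=\bigcup A\times B$ over $\BB(\K)\setminus S$. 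What your route buys is that the identification of the underlying concept set comes for free from the correspondence theorem rather than needing a separate argument, and it recovers the forward direction of \cref{lem:closed_subrelation} as a by-product; what the paper's route buys is that it reuses \cref{lem:closed_subrelation}, which it needs anyway for the equivalence stated there. Both proofs rest on the finiteness assumption under which the correspondence of \cite{koyda2021boolean} is stated, which the paper imposes globally.
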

\begin{proof}
	We know from~\cref{lem:lattice} that $\UBB(\K)\setminus\US$ is a sublattice of $\UBB(\K)$.
	Further, from~\cref{lem:closed_subrelation} follows that $\UBB(\K_S)$ is a sublattice of $\UBB(\K)$.
	Both contain exactly the concepts of $\BB(\K)$ that are not included in $\US$.
	\qed
\end{proof}

An example for a lattice with a dismantling interval is given in \cref{fig:dismantlable}.
The concept lattice of $\K_S$ is the concept lattice of $\K$ without the interval $\US$.

Note that this statement does not hold for intervals that are not dismantling for the lattice.
See \cref{fig:not_dismantlable} for a counterexample.

\begin{figure}[t]
  \centering
	\begin{minipage}{0.35\textwidth}
		\centering
		{\unitlength 0.6mm
			\begin{picture}(60,45)%
			\put(0,0){%
				\begin{diagram}{60}{45}
				\Node{1}{30}{0}
				\Node{2}{15}{15}
				\Node{3}{30}{15}
				\Node{4}{45}{15}
				\Node{5}{15}{30}
				\Node{6}{30}{30}
				\Node{7}{45}{30}
				\Node{8}{30}{45}
				\Edge{1}{2}
				\Edge{1}{3}
				\Edge{1}{4}
				\Edge{2}{5}
				\Edge{2}{6}
				\Edge{3}{5}
				\Edge{3}{7}
				\Edge{4}{6}
				\Edge{4}{7}
				\Edge{5}{8}
				\Edge{6}{8}
				\Edge{7}{8}
				\leftObjbox{2}{3}{1}{1}
				\rightObjbox{3}{3}{1}{2}
				\rightObjbox{4}{3}{1}{3}
				\leftAttbox{5}{3}{1}{a}
				\rightAttbox{6}{3}{1}{b}
				\rightAttbox{7}{3}{1}{c}
				\end{diagram}}
			\put(15,15){\ColorNode{gray}}
			\put(15,30){\ColorNode{gray}}
			\end{picture}}			
	\end{minipage}
	\begin{minipage}{0.3\textwidth}
		\centering
		\begin{cxt}%
			\att{a}%
			\att{b}%
			\att{c}%
			\obj{xc.}{1} %
			\obj{c.c}{2} %
			\obj{.cc}{3} %
		\end{cxt}
	\end{minipage}
	\begin{minipage}{0.3\textwidth}
	\centering
	{\unitlength 0.6mm
		\begin{picture}(60,45)%
		\put(0,0){%
			\begin{diagram}{60}{45}
			\Node{1}{30}{0}
			\Node{2}{30}{15}
			\Node{3}{45}{15}
			\Node{4}{30}{30}
			\Node{5}{45}{30}
			\Node{6}{30}{45}
			\Edge{1}{2}
			\Edge{1}{3}
			\Edge{2}{5}
			\Edge{3}{5}
			\Edge{4}{3}
			\Edge{4}{6}
			\Edge{5}{6}
			\leftObjbox{2}{3}{1}{2}
			\rightObjbox{3}{3}{1}{3}
			\leftObjbox{4}{3}{1}{1}
			\leftAttbox{4}{3}{1}{b}
			\rightAttbox{5}{3}{1}{c}	
			\leftAttbox{2}{3}{1}{a}	
			\end{diagram}}	
		\end{picture}}			
\end{minipage}
\caption{Example of a lattice (left) and the corresponding formal context $\K={\GMI}$ (middle).
		The highlighted interval $\US$ is dismantling for the lattice.
		Therefore, the highlighted $S$-removed incidences of $\K$ are a closed subrelation of $I$.
		The lattice on the right corresponds to the context $\K_S$.
	}
	\label{fig:dismantlable}
\end{figure}
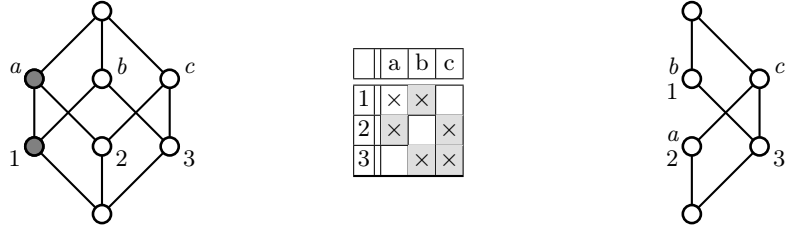

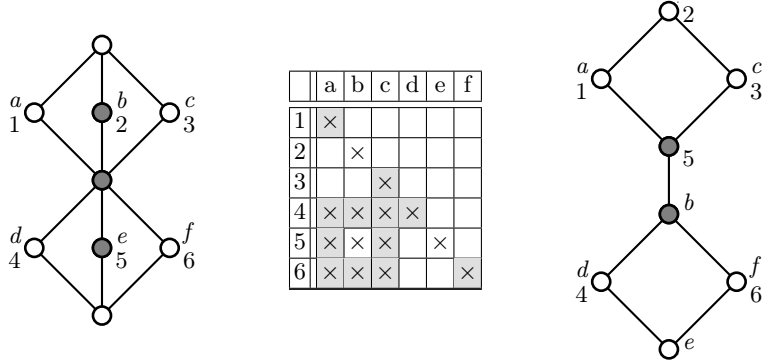
\begin{figure}[t]
  \centering
		\begin{minipage}{0.3\textwidth}
		\centering
		{\unitlength 0.6mm
			\begin{picture}(60,60)%
			\put(0,0){%
				\begin{diagram}{60}{60}
				\Node{1}{30}{0}
				\Node{2}{15}{15}
				\Node{3}{30}{15}
				\Node{4}{45}{15}
				\Node{5}{30}{30}
				\Node{6}{15}{45}
				\Node{7}{30}{45}
				\Node{8}{45}{45}
				\Node{9}{30}{60}
				\Edge{1}{2}
				\Edge{1}{3}
				\Edge{1}{4}
				\Edge{2}{5}
				\Edge{3}{5}
				\Edge{4}{5}
				\Edge{5}{6}
				\Edge{5}{7}
				\Edge{5}{8}
				\Edge{6}{9}
				\Edge{7}{9}
				\Edge{8}{9}
				\leftObjbox{2}{3}{1}{4}
				\rightObjbox{3}{3}{1}{5}
				\rightObjbox{4}{3}{1}{6}
				\leftObjbox{6}{3}{1}{1}
				\rightObjbox{7}{3}{1}{2}
				\rightObjbox{8}{3}{1}{3}
				\leftAttbox{2}{3}{1}{d}
				\rightAttbox{3}{3}{1}{e}
				\rightAttbox{4}{3}{1}{f}
				\leftAttbox{6}{3}{1}{a}
				\rightAttbox{7}{3}{1}{b}
				\rightAttbox{8}{3}{1}{c}
				\end{diagram}}
			\put(30,15){\ColorNode{gray}}
			\put(30,30){\ColorNode{gray}}
			\put(30,45){\ColorNode{gray}}
			\end{picture}}			
	\end{minipage}
	\begin{minipage}{0.3\textwidth}
		\centering
		\begin{cxt}%
			\att{a}%
			\att{b}%
			\att{c}%
			\att{d}%
			\att{e}%
			\att{f}%
			\obj{c.....}{1} %
			\obj{.x....}{2} %
			\obj{..c...}{3} %
			\obj{cccc..}{4} %
			\obj{cxc.x.}{5} %
			\obj{ccc..c}{6} %
		\end{cxt}
	\end{minipage}
\begin{minipage}{0.3\textwidth}
	\centering
	{\unitlength 0.6mm
		\begin{picture}(60,75)%
		\put(0,0){%
			\begin{diagram}{60}{75}
			\Node{1}{30}{0}
			\Node{2}{15}{15}
			\Node{3}{45}{15}
			\Node{4}{30}{30}
			\Node{5}{30}{45}
			\Node{6}{15}{60}
			\Node{7}{45}{60}
			\Node{8}{30}{75}
			\Edge{1}{2}
			\Edge{1}{3}
			\Edge{2}{4}
			\Edge{3}{4}
			\Edge{4}{5}
			\Edge{5}{6}
			\Edge{5}{7}
			\Edge{6}{8}
			\Edge{8}{7}
			\leftObjbox{2}{3}{1}{4}
			\rightObjbox{3}{3}{1}{6}
			\rightObjbox{5}{3}{1}{5}
			\leftObjbox{6}{3}{1}{1}
			\rightObjbox{7}{3}{1}{3}
			\rightObjbox{8}{3}{-1}{2}
			\leftAttbox{2}{3}{1}{d}
			\rightAttbox{3}{3}{1}{f}
			\rightAttbox{4}{3}{1}{b}
			\leftAttbox{6}{3}{1}{a}
			\rightAttbox{7}{3}{1}{c}
			\rightAttbox{1}{3}{0}{e}
			\end{diagram}}
		\put(30,45){\ColorNode{gray}}
		\put(30,30){\ColorNode{gray}}
		\end{picture}}			
\end{minipage}
\caption{Example of a lattice (left) and the corresponding formal context $\K=\GMI$ (middle).
		The highlighted interval $\US$ is not dismantling for the lattice.
		Therefore, the highlighted $S$-removed incidences of $\K$ are no closed subrelation of $I$.
		The lattice on the right corresponds to the context $\K_S$.
		The highlighted concepts $(\{4,6\},\{a,b,c\})$ and $(\{4,5,6\},\{a,c\})$ do not exist in the original lattice.
	}
	\label{fig:not_dismantlable}
\end{figure}

\Cref{thm:main} is a generalization of the following proposition concerning the dismantling of doubly irreducible lattice elements.

\begin{proposition}[Prop. 53 \cite{fca-book}]\label{prop:fca-doubly-irred}
  If $a= (g'',g') = (m',m'')$ is a doubly irreducible concept of a clarified context $\GMI$, then
  $$
  \BB(\GMI)\setminus \{a\} = \BB(G,M,I\setminus\{(g,m)\}).
  $$
\end{proposition}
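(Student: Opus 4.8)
The plan is to derive Proposition~\ref{prop:fca-doubly-irred} as a special case of \cref{thm:main}. The key observation is that a single-element interval $\US = [a,a] = \{a\}$ with $a$ doubly irreducible is precisely a dismantling (hence quasi-dismantling) interval for $\UBB(\GMI)$: since $a$ is infimum-irreducible it has a unique upper neighbor $a^*$, and being infimum-irreducible in the whole lattice it is in particular infimum-prime in the trivial filter $[a) = \{x \mid a \le x\}$ (the condition $x \wedge y \le a$ with $x,y \ge a$ forces $x \wedge y = a$, and irreducibility makes this an empty constraint — or more carefully, one uses that in $[a)$ the element $a$ is the bottom and hence vacuously prime there); symmetrically $a$ is supremum-prime in $(a]$. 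Also $a \notin \{\bot,\top\}$ because a clarified context whose concept lattice has $\bot$ or $\top$ doubly irreducible is degenerate, but in any case we only need the \emph{quasi}-dismantling hypothesis of \cref{thm:main}, which holds unconditionally here. Thus \cref{thm:main} applies and gives $\UBB(\K) \setminus \{a\} = \UBB(\K_S)$.

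The second step is to identify $\K_S$ with $(G, M, I \setminus \{(g,m)\})$ when $a = (g'',g') = (m',m'')$ and $\K$ is clarified. By the simpler representation of the $S$-removed incidences (the \cref{lem:closed_subrelation} preamble / the preceding lemma), $I_S = \bigcup_{(A,B) \in \BB(\K) \setminus \{a\}} A \times B$, so I need to show this equals $I \setminus \{(g,m)\}$. The inclusion $I_S \subseteq I \setminus \{(g,m)\}$ amounts to checking that $(g,m)$ lies in no rectangle $A \times B$ of a concept $(A,B) \ne a$: if $g \in A$ and $m \in B$ then $A \subseteq m' $ and $g' \supseteq B \supseteq \{m\}$ so $A \subseteq m' = a$'s extent and... the cleanest route is that $g \in A$ gives $(A,B) \ge \gamma g = a$ while $m \in B$ gives $(A,B) \le \mu m = a$, forcing $(A,B) = a$, contradiction. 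For the reverse inclusion, every incidence $(h,n) \in I$ with $(h,n) \ne (g,m)$ lies in the rectangle of $\gamma h = (h'',h')$, and one must argue $\gamma h \ne a$ or that $(h,n)$ is covered by another concept; here clarification is used: if $\gamma h = a$ then $h'' = g''$, and clarification gives $h = g$; similarly if $\mu n = a$ then $n = m$; so $(h,n) \ne (g,m)$ means $\gamma h \ne a$ or $\mu n \ne a$, and in either case $(h,n)$ lies in a rectangle of a concept different from $a$ — for instance if $\gamma h \ne a$ then $(h,n) \in h'' \times h' = $ extent/intent of $\gamma h$. This shows $I_S = I \setminus \{(g,m)\}$, and since also $G_S = G$, $M_S = M$ (as $\bot, \top \ne a$, or directly since every object/attribute survives), we get $\K_S = (G, M, I \setminus \{(g,m)\})$, completing the identification.

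Combining the two steps yields $\BB(\GMI) \setminus \{a\} = \BB(G,M,I \setminus \{(g,m)\})$ as desired. I expect the main obstacle to be the careful handling of the degenerate cases and the precise use of the clarified hypothesis in the second step — specifically, ruling out that the single crossed-out incidence $(g,m)$ secretly belongs to some other concept's rectangle, and conversely that removing it does not accidentally kill a different concept. The order-theoretic part (checking $\{a\}$ is quasi-dismantling) is essentially immediate from the fact that a single-element interval is always quasi-dismantling for any lattice, since the conditions "$a$ is supremum-prime in $(a]$" and "$a$ is infimum-prime in $[a)$" are vacuous — $a$ is both the top of $(a]$ and the bottom of $[a)$. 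So in fact one does not even need $a$ to be doubly irreducible for \cref{thm:main} to apply; doubly irreducibility is only what makes the removed incidence set a \emph{single} cross, and clarification is what pins that cross down to $(g,m)$.
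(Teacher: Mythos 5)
The paper itself gives no proof of this statement: it is quoted verbatim from the literature (Prop.~53 of \cite{fca-book}), and the surrounding text merely observes that \cref{thm:main} generalizes it, with \cref{prop:connection-dismantling} supplying the link. Your proposal fills in exactly that derivation, and your second step --- identifying $\K_S$ with $(G,M,I\setminus\{(g,m)\})$, including the use of the clarified hypothesis to pin the removed rectangle down to the single cross $(g,m)$ and the $\gamma g\le(A,B)\le\mu m$ argument for the forward inclusion --- is correct and careful.

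However, the first step contains a genuine error. You claim that for a singleton interval $[a,a]$ the conditions ``$a$ is $\supprime$ in $(a]$'' and ``$a$ is $\infprime$ in $[a)$'' are vacuous because $a$ is the top of $(a]$ and the bottom of $[a)$, and you conclude that every singleton interval is quasi-dismantling, so that double irreducibility would not even be needed for \cref{thm:main} to apply. This is false. For $x,y\in(a]$ the hypothesis $a\le x\vee y$ is equivalent to $x\vee y=a$, so supremum-primality of $a$ in $(a]$ says precisely that $a$ is not the join of two strictly smaller elements --- and this is exactly where supremum-irreducibility must be used (if $x,y<a$ then $x,y\le a_*$, hence $x\vee y\le a_*<a$); dually for $\infprime$ in $[a)$. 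If these conditions were vacuous, the paper's \cref{prop:connection-quasi-dismantling} would be contentless, and Proposition~\ref{prop:fca-doubly-irred} would hold for arbitrary concepts $a$, which it does not: if $a=x\vee y$ with $x,y<a$, then $\BB(\K)\setminus\{a\}$ is not closed under joins and so cannot be the concept set of any closed subrelation. The repair is one line --- replace the ``vacuity'' argument by the computation above, which is precisely the content of \cref{prop:connection-dismantling} --- but as written the order-theoretic half of your argument rests on a wrong claim, and the closing sentence drawing consequences from it should be deleted.
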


\Cref{prop:connection-dismantling,prop:connection-quasi-dismantling} clarify how \cref{prop:fca-doubly-irred} and \cref{thm:main} are connected.

\begin{proposition}\label{prop:connection-dismantling}
	Let $\K=\GMI$ be a formal context and $\UBB(\K)$ its corresponding concept lattice.
	Let $\US\leq\UBB(\K)$ be an interval with $|\US|=1$.
	$\US$ is dismantling for $\UBB(\K)$ if and only if $\US$ is doubly irreducible.
\end{proposition}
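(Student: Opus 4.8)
The plan is to unwind both sides of the equivalence to the definition of doubly irreducible for a single-element interval $\US = [u,u] = \{u\}$. Recall that $\US$ dismantling means: $u$ is $\supprime$ in $(u]$, $u$ is $\infprime$ in $[u)$, and $u \notin \{\bot,\top\}$. So the core of the argument is to show that, for a single element, ``$u$ is $\supprime$ in $(u]$'' is equivalent to ``$u$ is $\supremum$-irreducible'' (and dually for the infimum condition), and then to handle the boundary exclusion.

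First I would prove the equivalence for the supremum side. If $u$ is not $\supremum$-irreducible, then $u = u_* = \bigvee\{x \in \UL \mid x < u\}$. If $u = \bot$, the boundary condition already fails, so we may assume $u \neq \bot$; then the set $\{x < u\}$ is nonempty, and since $\UL$ is finite we can write $u$ as a finite join $x_1 \vee \dots \vee x_k$ of elements strictly below $u$, hence (grouping) as $a \vee b$ with $a,b < u$; all of $a,b,a\vee b$ lie in $(u]$, so $u \le a \vee b$ but $u \not\le a$ and $u \not\le b$, witnessing that $u$ is not $\supprime$ in $(u]$. Conversely, if $u$ is $\supremum$-irreducible, then $u$ has a unique lower neighbour $u_* < u$; suppose $u \le x \vee y$ with $x,y \in (u]$, i.e.\ $x,y \le u$, so $x \vee y \le u$, forcing $x \vee y = u$. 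If both $x < u$ and $y < u$, then $x,y \le u_*$, so $u = x \vee y \le u_* < u$, a contradiction; hence $x = u$ or $y = u$, i.e.\ $u \le x$ or $u \le y$. Thus $u$ is $\supprime$ in $(u]$ iff $u$ is $\supremum$-irreducible. The infimum side is dual.

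Next I would assemble the two directions of the proposition. For ``$\Rightarrow$'': if $\US = \{u\}$ is dismantling, then $u$ is $\supprime$ in $(u]$ and $\infprime$ in $[u)$, so by the above $u$ is $\supremum$-irreducible and $\infimum$-irreducible, i.e.\ doubly irreducible. For ``$\Leftarrow$'': if $u$ is doubly irreducible, then by the equivalence $u$ is $\supprime$ in $(u]$ and $\infprime$ in $[u)$, so $\US$ is quasi-dismantling; it remains to check $u \notin \{\bot,\top\}$. Here I would invoke that $\bot$ and $\top$ are never doubly irreducible in this sense: $\bot$ has no lower neighbour (so $\bot_* = \bigvee \emptyset = \bot$, meaning $\bot$ is not $\supremum$-irreducible), and dually $\top$ is not $\infimum$-irreducible. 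Hence a doubly irreducible $u$ automatically satisfies $u \notin \{\bot,\top\}$, so $\US$ is in fact dismantling.

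The only mildly delicate point is the finiteness argument turning ``$u = u_*$ with $u \neq \bot$'' into an explicit decomposition $u = a \vee b$ with $a,b < u$ — but since the paper has already restricted to finite lattices, this is routine: take the (finitely many) lower neighbours of $u$; their join is $u_* = u$, and either there are at least two of them (group them into two nonempty parts) or there is exactly one, in which case $u$ is $\supremum$-irreducible after all. I expect no real obstacle; the proposition is essentially a dictionary-check that the interval notion specializes correctly, with the boundary clause being the one thing that needs the explicit remark that $\bot,\top$ fail double irreducibility.
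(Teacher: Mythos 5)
Your argument is correct: the paper actually states \cref{prop:connection-dismantling} without proof, and your reduction of ``$\supprime$ in $(u]$'' to supremum-irreducibility (dually for $\infprime$ in $[u)$), together with the observation that $\bot$ is never supremum-irreducible and $\top$ never infimum-irreducible (so double irreducibility automatically gives $u\notin\{\bot,\top\}$), is exactly the dictionary-check the authors leave implicit. The one point worth tightening is the ``grouping'' step: rather than splitting an arbitrary join decomposition (where one part could still equal $u$), it is cleaner to note that a non-supremum-irreducible $u\neq\bot$ in a finite lattice has at least two lower covers $a,b$, and then $a\vee b=u$ with $a,b<u$ directly witnesses the failure of primality in $(u]$.
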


\begin{proposition}\label{prop:connection-quasi-dismantling}
	Let $\K=\GMI$ be a formal context and $\UBB(\K)$ its corresponding concept lattice.
	Let $\US\leq\UBB(\K)$ be an interval with $|\US|=1$.
	$\US$ is quasi-dismantling if and only if 
	\begin{itemize}
		\item $\US$ is doubly irreducible or
		\item $\US=\top$ and $\US$ is supremum-irreducible or
		\item $\US=\bot$ and $\US$ is infimum-irreducible or
		\item $\US=\top=\bot$.
	\end{itemize} 
\end{proposition}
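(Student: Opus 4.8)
Since $|\US|=1$ forces $u=v$, set $c\coloneqq u=v$, so that $(v]=(c]$ and $[u)=[c)$. By \cref{def:dismanting}, $\US$ is quasi-dismantling for $\UBB(\K)$ exactly when $c$ is $\supprime$ in $(c]$ and $c$ is $\infprime$ in $[c)$. The plan is to translate each of these two primeness conditions into an irreducibility condition (with a boundary exception), and then to split into the four cases $c\notin\{\bot,\top\}$, $c=\bot\neq\top$, $c=\top\neq\bot$, and $c=\bot=\top$.

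The heart of the argument is the following claim, valid in a finite lattice $\UL$: \emph{$c$ is $\supprime$ in $(c]$ if and only if $c=\bot$ or $c$ is supremum-irreducible}; dually, \emph{$c$ is $\infprime$ in $[c)$ if and only if $c=\top$ or $c$ is infimum-irreducible}. For the first equivalence: if $c=\bot$, then $(c]=\{\bot\}$ and the implication defining $\supprime$ holds vacuously. If $c$ is supremum-irreducible, so $c_*<c$, and $x,y\le c$ satisfy $c\le x\vee y$, then $x\vee y=c$; were both $x<c$ and $y<c$, the definition of $c_*$ would give $x\le c_*$ and $y\le c_*$, hence $x\vee y\le c_*<c$, contradicting $x\vee y=c$, so $x=c$ or $y=c$. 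Conversely, if $c\neq\bot$ is not supremum-irreducible, then $c_*=c$, and since $\UL$ is finite $c$ has two distinct, hence incomparable, lower neighbours $p\neq q$; from $p<p\vee q\le c$ we get $p\vee q=c$, and as $p,q\in(c]$ with $c\not\le p$ and $c\not\le q$, $c$ is not $\supprime$ in $(c]$. The second equivalence is the order dual.

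Combining the two equivalences yields the proposition by a routine case distinction. For $c\notin\{\bot,\top\}$, quasi-dismantling means that $c$ is both supremum- and infimum-irreducible, \ie doubly irreducible; this agrees with \cref{prop:connection-dismantling}, since for such $c$ quasi-dismantling and dismantling coincide. For $c=\bot\neq\top$, it means that $c$ is infimum-irreducible, the third bullet, and no other bullet can apply because $\bot$ is never supremum-irreducible. The case $c=\top\neq\bot$ is dual and gives the second bullet. For $c=\bot=\top$ both primeness conditions hold automatically, giving the fourth bullet. The only genuinely delicate points are these boundary cases: recognizing that $\bot$ (resp.\ $\top$) is vacuously $\supprime$ in the one-element ideal $(\bot]$ (resp.\ $\infprime$ in $[\top)$) although it is never supremum-irreducible (resp.\ infimum-irreducible), and using finiteness to extract two distinct lower neighbours of a non-supremum-irreducible element different from $\bot$. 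Everything else is bookkeeping.
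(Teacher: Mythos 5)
Your proof is correct. The key equivalences --- $c$ is $\supprime$ in $(c]$ iff $c=\bot$ or $c$ is supremum-irreducible, and dually for $\infprime$ in $[c)$ --- are established soundly (using finiteness to get two incomparable lower neighbours whose join is $c$ in the reducible case), and the four-way case split matches the four bullets exactly. The paper states this proposition without proof, and your argument is the natural one it implicitly relies on; it is also consistent with \cref{prop:connection-dismantling}, since for $c\notin\{\bot,\top\}$ your two equivalences reduce quasi-dismantling to double irreducibility.
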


If we consider multiple intervals at once, one direction of \cref{lem:closed_subrelation} still holds:

\begin{lemma}
	Let $\K=\GMI$ be a formal context, $\US_1\ldots \US_k$ be intervals in $\UBB(\K)$.
	If $\US_1\ldots \US_k$ are quasi-dismantling, then $(G_{S_1\cup\ldots\cup S_k},M_{S_1\cup\ldots\cup S_k},I_{S_1\cup\ldots\cup S_k})$ is a closed subcontext of $\K$.
\end{lemma}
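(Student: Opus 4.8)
The plan is to deduce the claim from the one-to-one correspondence between sublattices of $\UBB(\K)$ and closed subcontexts of $\K$ of~\cite{koyda2021boolean}, combined with the preceding lemma on the simpler representation of the $S$-removed objects, attributes and incidences. Write $S \coloneqq S_1 \cup \dots \cup S_k$, so that the context in question is exactly $\K_S = (G_S, M_S, I_S)$. The point is that, unlike for a single interval, $S$ need not be an interval, so~\cref{lem:closed_subrelation} and~\cref{thm:main} are not directly applicable; instead we go through the sublattice correspondence.

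First I would show that the concept set $U \coloneqq \BB(\K)\setminus S$ is a sublattice of $\UBB(\K)$. For each $i$ the interval $\US_i$ is quasi-dismantling, so by~\cref{lem:lattice} the order $\UBB(\K)\setminus\US_i$, whose underlying set is $\BB(\K)\setminus S_i$, is a sublattice of $\UBB(\K)$; in particular $\BB(\K)\setminus S_i$ is closed under $\vee$ and $\wedge$ computed in $\UBB(\K)$. Since
\[
  U \;=\; \BB(\K)\setminus\bigcup_{i=1}^{k} S_i \;=\; \bigcap_{i=1}^{k}\bigl(\BB(\K)\setminus S_i\bigr),
\]
and an intersection of subsets that are each closed under $\vee$ and $\wedge$ is again closed under $\vee$ and $\wedge$, the set $U$ is a sublattice of $\UBB(\K)$.

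Next I would invoke~\cite{koyda2021boolean}: for the sublattice $U$ of $\UBB(\K)$ the context $\bigl(\bigcup_{(A,B)\in U}A,\ \bigcup_{(A,B)\in U}B,\ \bigcup_{(A,B)\in U}A\times B\bigr)$ is a closed subcontext of $\K$. By the lemma on the simpler representation of the $S$-removed sets, applied to the concept set $S$ (which it allows to be an arbitrary subset of $\BB(\K)$), this context is precisely $(G_S, M_S, I_S) = \K_S$. Hence $\K_S$ is a closed subcontext of $\K$, as claimed.

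The main obstacle is bookkeeping rather than any hard argument: one must apply~\cref{lem:lattice} separately to each interval and then combine the resulting sublattices via a (finite) intersection, and one must match the image of $U$ under the~\cite{koyda2021boolean} correspondence with $\K_S$ using the simpler-representation lemma. A minor degenerate case worth noting is when the $\US_i$ jointly cover $\BB(\K)$, so that $U = \BB(\K)\setminus S = \emptyset$ and $\K_S$ is the empty context; this is either covered by the same correspondence or can be excluded by convention, and it does not affect the generic argument.
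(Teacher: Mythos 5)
Your proof is correct, but it takes a genuinely different route from the paper's. The paper argues by contraposition, directly mirroring the single-interval argument of \cref{lem:closed_subrelation}: a concept $c$ of the removed context that is not a concept of $\K$ arises as a join or meet of concepts outside $S_1\cup\ldots\cup S_k$; the corresponding join or meet taken in $\UBB(\K)$ then lies in some $S_i$, which violates quasi-dismantlability of $\US_i$. You instead observe that each $\BB(\K)\setminus S_i$ is closed under the ambient $\vee$ and $\wedge$ by \cref{lem:lattice}, that the intersection $\BB(\K)\setminus(S_1\cup\ldots\cup S_k)$ is therefore a sublattice of $\UBB(\K)$, and then invoke the sublattice/closed-subcontext correspondence of \cite{koyda2021boolean} together with the simpler-representation lemma to identify the resulting closed subcontext with $\K_{S_1\cup\ldots\cup S_k}$. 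Your version is shorter, avoids repeating the contrapositive case analysis, and yields the stronger conclusion $\UBB(\K_{S_1\cup\ldots\cup S_k})=\UBB(\K)\setminus(\US_1\cup\ldots\cup\US_k)$ (the multi-interval analogue of \cref{thm:main}) essentially for free; the price is that it leans on the converse direction of the correspondence from \cite{koyda2021boolean}, whereas the paper's argument only needs the elementary definition of a closed subcontext. Both treatments share the same degenerate blind spot when the intervals jointly cover all of $\BB(\K)$, which you at least flag explicitly.
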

\begin{proof}
	We show the contraposition.
	Assume $(G_{S_1\cup\ldots\cup S_k},M_{S_1\cup\ldots\cup S_k},I_{S_1\cup\ldots\cup S_k})$ is no closed subcontext.
	By definition, we have $G_{S_1\cup\ldots\cup S_k}\subseteq G$, $M_{S_1\cup\ldots\cup S_k}\subseteq M$ and $I_{S_1\cup\ldots\cup S_k}\subseteq I$.
	Then there exists $c\in\BB(\K_{S_1\cup\ldots\cup S_k})$ with $c\not\in\BB(\K)$.
	Hence, there exist $x,y\in\BB(\K)$ such that $c=x\vee y$ or $c=x\wedge y$ in $\BB(\K_{S_1\cup\ldots\cup S_k})$.
	In case $c=x\vee y$ in $\BB(\K_{S_1\cup\ldots\cup S_k})$ there exists $z\in\BB(\K)$ such that $z=x\vee y$.
	Since $z\in S_1\cup\ldots\cup S_k$ we have $z\in S_i$ for some $i$.
	The rest follows analogous to the proof of \cref{lem:closed_subrelation}.
	\qed
\end{proof}

  However, not all closed subcontexts (and therefore sublattices of the corresponding concept lattice) can be obtained via a quasi-dismantling interval or a set of quasi-dismantling intervals for the corresponding lattice, see e.g. \cref{fig:closed-subrelation-no-dismantlable-interval}.

  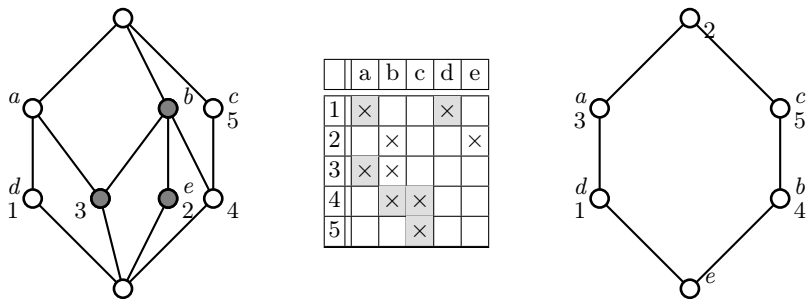
\begin{figure}[t]
    \centering
	\begin{minipage}{0.3\textwidth}
		\centering
      	{\unitlength 0.6mm
		\begin{picture}(40,60)%
		\put(0,0){%
			\begin{diagram}{40}{60}
			\Node{1}{20}{0}
			\Node{2}{0}{20}
			\Node{3}{15}{20}
      \Node{4}{30}{20}
      \Node{5}{40}{20}
      \Node{6}{0}{40}     
      \Node{7}{30}{40}     
      \Node{8}{40}{40}     
      \Node{9}{20}{60}     
      \Edge{1}{2}
      \Edge{1}{3}
      \Edge{1}{4}
      \Edge{1}{5}
      \Edge{2}{6}
      \Edge{3}{6}
      \Edge{3}{7}
      \Edge{4}{7}
      \Edge{5}{7}
      \Edge{5}{8}
      \Edge{6}{9}
      \Edge{7}{9}
      \Edge{8}{9}
      \leftObjbox{2}{3}{1}{1}
      \rightObjbox{4}{3}{1}{2}
			\leftObjbox{3}{3}{1}{3}
      \rightObjbox{5}{3}{1}{4}
      \rightObjbox{8}{3}{1}{5}
			\leftAttbox{6}{3}{1}{a}
			\rightAttbox{7}{3}{1}{b}
      \rightAttbox{8}{3}{1}{c}
      \leftAttbox{2}{3}{1}{d}
    	\rightAttbox{4}{3}{1}{e}	
    \end{diagram}}
			\put(15,20){\ColorNode{gray}}
			\put(30,20){\ColorNode{gray}}
			\put(30,40){\ColorNode{gray}}
		\end{picture}}	
	\end{minipage}
	\begin{minipage}{0.3\textwidth}
		\centering
		\begin{cxt}%
			\att{a}%
			\att{b}%
			\att{c}%
      \att{d}
      \att{e}
			\obj{c..c.}{1} %
			\obj{.x..x}{2} %
			\obj{cx...}{3} %
      \obj{.cc..}{4} %
      \obj{..c..}{5} %
		\end{cxt}
	\end{minipage}
	\begin{minipage}{0.3\textwidth} 
	\centering 
  	{\unitlength 0.6mm
		\begin{picture}(40,60)%
		\put(0,0){%
			\begin{diagram}{40}{60}
			\Node{1}{20}{0}
			\Node{2}{0}{20}
      \Node{3}{40}{20}
      \Node{4}{0}{40}        
      \Node{5}{40}{40}     
      \Node{6}{20}{60}     
      \Edge{1}{2}
      \Edge{1}{3}
      \Edge{2}{4}
      \Edge{3}{5}
      \Edge{4}{6}
      \Edge{5}{6}
      \leftObjbox{2}{3}{1}{1}
      \leftObjbox{4}{3}{1}{3}
      \rightObjbox{3}{3}{1}{4}
      \rightObjbox{5}{3}{1}{5}
      \rightObjbox{6}{3}{1}{2}
			\leftAttbox{2}{3}{1}{d}
      \leftAttbox{4}{3}{1}{a}
      \rightAttbox{3}{3}{1}{b}
      \rightAttbox{5}{3}{1}{c}
      \rightAttbox{1}{3}{1}{e}
      \end{diagram}}
		\end{picture}}	
\end{minipage} 
\caption{
  A closed subrelation (middle) that can not be obtained as via dismantling intervals.
  The interval $[(\{3\},\{a,b\}),(\{2,3,4\},\{b\})]$ is not dismantling, and neither $(\{3\},\{a,b\})$ nor $(\{2,3,4\},\{b\})$ are doubly irreducible.
}
\label{fig:closed-subrelation-no-dismantlable-interval}
\end{figure}

Further note, that not every lattice contains a quasi-dismantling interval besides the trivial one (the complete lattice) or any dismantling interval at all, see for example \cref{fig:lattice-without-DI}.

\begin{figure}[t]
  \centering
  	\begin{minipage}{0.3\textwidth} 
      \centering
    \scalebox{0.9}{
      {
        \unitlength 0.6mm
		\begin{picture}(75,60)%
		\put(0,0){%
			\begin{diagram}{75}{60}
			\Node{ 1}{37.5}{0}
			\Node{ 2}{0}{20}
      \Node{ 3}{15}{20}
      \Node{ 4}{30}{20}        
      \Node{ 5}{45}{20}     
      \Node{ 6}{60}{20}     
      \Node{ 7}{75}{20}
      \Node{ 8}{0}{40}
      \Node{ 9}{25}{40}
      \Node{10}{50}{40}
      \Node{11}{75}{40}
      \Node{12}{37.5}{60}
      \Edge{1}{2}
      \Edge{1}{3}
      \Edge{1}{4}
      \Edge{1}{5}
      \Edge{1}{6}
      \Edge{1}{7}
      \Edge{2}{8}
      \Edge{2}{9}
      \Edge{3}{8}
      \Edge{3}{10}
      \Edge{4}{8}
      \Edge{4}{11}
      \Edge{5}{9}
      \Edge{5}{10}
      \Edge{6}{9}
      \Edge{6}{11}
      \Edge{7}{10}
      \Edge{7}{11}
      \Edge{8}{12}
      \Edge{9}{12}
      \Edge{10}{12}
      \Edge{11}{12}
      \end{diagram}}
  \end{picture}
}
}
\end{minipage}
\caption{Smallest (non-trivial) lattice that has no dismantling interval.}
\label{fig:lattice-without-DI}
\end{figure}
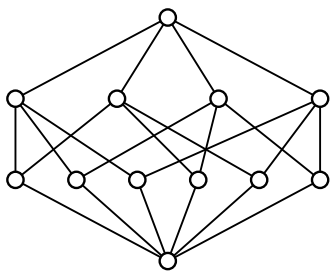

However, there is always a unique smallest lattice that can be obtained by iteratively removing all dismantling intervals, as shown in \cref{thm:unique-dismantling-core}.
To this end, we make use of the following lemma concerning the dismantlability of intervals upon removing one of them.

\begin{lemma}
  \label{lem:dismantlable-intervals-stay-dismantlable}
  Let $\UL$ be a lattice and $\US_1,\US_2$ dismantling intervals for $\UL$ such that $\US_2\not\subseteq\US_1$.
  Then, $\US_2\setminus\US_1$ is a dismantling interval for $\UL\setminus\US_1$.
\end{lemma}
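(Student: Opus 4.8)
The plan is to pin down $\US_2\setminus\US_1$ explicitly as an interval of $\UL':=\UL\setminus\US_1$ and then verify the two primeness conditions by hand. Write $\US_1=[u_1,v_1]$ and $\US_2=[u_2,v_2]$, and recall that an interval $[a,b]$ of a lattice is convex and closed under $\vee$ and $\wedge$. Since $\US_1$ is dismantling for $\UL$, \cref{lem:lattice} and the corollary following it show that $\UL'$ is a complete sublattice of $\UL$; in particular $\UL'$ inherits the meet, join, bottom and top of $\UL$, so all lattice operations below may be computed in $\UL$.

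The crux --- and the step I expect to be the main obstacle --- is to show that $\US_2\setminus\US_1$ has a least element $\tilde u$ and a greatest element $\tilde v$. I treat greatest elements; least elements are dual. By hypothesis $\US_2\setminus\US_1$ is nonempty, and it is finite, so it has a maximal element; I claim this maximal element is unique. Suppose $z_1\ne z_2$ are both maximal in $\US_2\setminus\US_1$ and put $z:=z_1\vee z_2$. Then $z\in\US_2$, and $z>z_1$ because $z_2\not\le z_1$. If $z\notin\US_1$, then $z\in\US_2\setminus\US_1$ lies strictly above $z_1$, contradicting maximality. If $z\in\US_1=[u_1,v_1]$, then $z_1\vee z_2\le v_1$ gives $z_1,z_2\in(v_1]$, while $z_1\vee z_2\ge u_1$; since $u_1$ is $\supprime$ in $(v_1]$ (as $\US_1$ is dismantling), $u_1\le z_1$ or $u_1\le z_2$, say the former, so $z_1\in[u_1,v_1]=\US_1$ --- contradicting $z_1\in\US_2\setminus\US_1$. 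Hence there is a unique maximal element $\tilde v$; dually, using that $v_1$ is $\infprime$ in $[u_1)$, there is a least element $\tilde u$. Moreover $\US_2\setminus\US_1=[\tilde u,\tilde v]$ as an interval of $\UL'$: the inclusion "$\subseteq$" is clear, and if $w\in\UL'$ with $\tilde u\le w\le\tilde v$ then $w\in\US_2$ by convexity of $\US_2$ (as $\tilde u,\tilde v\in\US_2$) and $w\notin\US_1$ because $w\in\UL'$, so $w\in\US_2\setminus\US_1$.

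It then remains to check that $[\tilde u,\tilde v]$ is dismantling for $\UL'$. Since $\tilde u,\tilde v\in\US_2$ we have $\tilde u\ge u_2>\bot$ and $\tilde v\le v_2<\top$ (using $u_2,v_2\notin\{\bot,\top\}$), and $\bot,\top$ coincide in $\UL'$ and $\UL$, so $\tilde u\ne\bot$ and $\tilde v\ne\top$. Next, $\tilde u$ is $\supprime$ in $(\tilde v]$ (taken in $\UL'$): let $x,y\in\UL'$ with $x,y\le\tilde v$ and $\tilde u\le x\vee y$. As $\tilde u\ge u_2$, $x,y\le\tilde v\le v_2$, and $u_2$ is $\supprime$ in $(v_2]$ (as $\US_2$ is dismantling), we get $u_2\le x$ or $u_2\le y$, say $u_2\le x$. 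Then $x\in[u_2,v_2]=\US_2$ and $x\notin\US_1$ (because $x\in\UL'$), so $x\in\US_2\setminus\US_1=[\tilde u,\tilde v]$, whence $\tilde u\le x$. The dual computation, using that $v_2$ is $\infprime$ in $[u_2)$, shows $\tilde v$ is $\infprime$ in $[\tilde u)$, completing the proof.

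The whole argument is self-dual, swapping $u$'s with $v$'s, $\vee$ with $\wedge$, and least with greatest elements of $\US_2\setminus\US_1$, so only one side of each statement needs spelling out. Dismantlability of $\US_1$ enters only through the uniqueness of the maximal (resp.\ minimal) element of $\US_2\setminus\US_1$, and dismantlability of $\US_2$ only in the final primeness check.
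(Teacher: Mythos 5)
Your proof is correct and follows essentially the same strategy as the paper's: both first establish that $\US_2\setminus\US_1$ is an interval of $\UL\setminus\US_1$ by exploiting the supremum-/infimum-primeness of the endpoints of $\US_1$ (to rule out joins or meets escaping into $\US_1$), and then verify the primeness of the new endpoints via the primeness of $u_2$ and $v_2$. Your write-up is somewhat more explicit than the paper's --- constructing $\tilde u,\tilde v$ as the unique minimal/maximal elements and checking $\tilde u\ne\bot$, $\tilde v\ne\top$ --- but the underlying argument is the same.
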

\begin{proof}
  We first show that $\US_2\setminus\US_1$ is an interval:
  Assume $\US_2\setminus\US_1$ is no interval in $\UL\setminus\US_1$.
  Then, without loss of generality, there exist $x,y\in\US_2\setminus\US_1$ such that $z\coloneqq x \vee y \not\in \US_2\setminus\US_1$ in $\UL\setminus\US_1$.
  Either $z = x\vee y$ in $\UL$, thus $z\in\US_2$ and therefore $z\in\US_2\setminus\US_1$\contradiction; or $z\not= x\vee y$ in $\UL$, hence $w = x\vee y$ in $\UL$ with $w\in \US_1$ and since $\US_1$ is dismantling,  $x\in\US_1$ or $y\in \US_1$\contradiction.
  Thus, $\US_2\setminus\US_1$ is an interval in $\UL\setminus\US_1$.

  It remains to show that $\US_2\setminus\US_1$ is dismantling for $\UL\setminus\US_1$.
  Let $[u,v]=\US_2\setminus\US_1$ and assume $\US_2\setminus\US_1$ is not dismantling.
  Then, $u$ is not $\supprime$ in $(v]$ or $v$ is not $\infprime$ in $[u)$.
  Without loss of generality, assume $u$ is not $\supprime$ in $(v]$.
  Then, there exist $x,y\in\UL\setminus\US_1$ such that $x,y\not\in[u,v]$ and $x\vee y \in [u,v]$.
  Hence, $x\vee y\in\US_2$ in $\UL$.
  Because $\US_2$ is dismantling for $\UL$ it follows that $x\in\US_2$ or $y\in\US_2$ and therefore $u$ is $\supprime$\contradiction. Thus, $\US_2\setminus\US_1$ is dismantling in $\UL\setminus\US_1$.\qed
\end{proof}

Let $DI(\UL)$ be the family of all subsets of $\UL$ that can be obtained by iterated dismantling from $\UL$, \ie, by iteratively removing dismantling intervals starting from $\UL$.
A smallest element of $DI(\UL)$ is called a \emph{DI-core} of $\UL$.

\begin{theorem}\label{thm:unique-dismantling-core}
	Let $\UL$ be a lattice. There exists a unique \emph{DI-core}.
\end{theorem}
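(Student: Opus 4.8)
The plan is to establish the existence of a unique DI-core by showing that the iterated-dismantling relation has the diamond (confluence) property, so that a standard Newman-style argument applies. Since all sets are finite, the process of iteratively removing dismantling intervals terminates, so at least one DI-core exists; the real content is uniqueness. The key tool is \cref{lem:dismantlable-intervals-stay-dismantlable}: if $\US_1,\US_2$ are dismantling for $\UL$ with $\US_2\not\subseteq\US_1$, then $\US_2\setminus\US_1$ is dismantling for $\UL\setminus\US_1$.

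First I would prove a one-step local confluence claim: if $\UL\setminus\US_1$ and $\UL\setminus\US_2$ are both obtained from $\UL$ by removing a single dismantling interval, then there is a lattice $\UL'$ reachable from both by iterated dismantling. If $\US_1=\US_2$ this is trivial. If $\US_2\subseteq\US_1$ (or symmetrically), then $\UL\setminus\US_1$ is obtained from $\UL\setminus\US_2$ in one further step, because $\US_1\setminus\US_2$ must be dismantling for $\UL\setminus\US_2$ — here one reuses the interval-and-primeness bookkeeping from the proof of \cref{lem:dismantlable-intervals-stay-dismantlable}, noting that removing a \emph{sub}interval cannot create new joins/meets landing inside $\US_1\setminus\US_2$. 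Otherwise $\US_1$ and $\US_2$ are incomparable as sets (they may still overlap); by the lemma, $\US_2\setminus\US_1$ is dismantling for $\UL\setminus\US_1$ and $\US_1\setminus\US_2$ is dismantling for $\UL\setminus\US_2$, and in both cases the resulting lattice has underlying set $L\setminus(S_1\cup S_2)$ with the induced order. Since a suborder of $\UL$ on a fixed element set is uniquely determined, both one-step reductions extend by one more dismantling step to the \emph{same} lattice $\UL\setminus(\US_1\cup\US_2)$. This gives local confluence.

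Next, I would promote local confluence to global confluence. Because the element set strictly decreases at every nontrivial dismantling step and $L$ is finite, the reduction relation is terminating; by Newman's Lemma, a terminating locally confluent relation is confluent. Confluence together with termination yields a unique normal form: any two maximal reduction sequences from $\UL$ end at the same lattice, which is therefore the unique DI-core. Alternatively, avoiding an explicit appeal to Newman's Lemma, one can argue directly by induction on $|L|$: given two DI-cores reached via first steps $\UL\setminus\US_1$ and $\UL\setminus\US_2$, apply the local confluence claim to get a common descendant $\UL''$, then use the induction hypothesis on $\UL\setminus\US_1$ (whose DI-core is unique and must equal that of $\UL''$, hence that of $\UL\setminus\US_2$) to conclude both original DI-cores coincide.

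The main obstacle I anticipate is the case analysis in the local confluence step, specifically handling the situation where $\US_1$ and $\US_2$ overlap without one containing the other: one must check carefully that $\US_2\setminus\US_1$ really is an \emph{interval} of $\UL\setminus\US_1$ with the right endpoints and that iterating the two single-step removals in either order produces literally the same suborder on $L\setminus(S_1\cup S_2)$ — this is where the primeness hypotheses do the work, exactly as in \cref{lem:dismantlable-intervals-stay-dismantlable}, and one must be sure no spurious new comparabilities or incomparabilities arise. Once that bookkeeping is pinned down, the confluence-plus-termination wrap-up is routine.
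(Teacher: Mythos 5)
Your proposal is correct and rests on exactly the same engine as the paper's proof: \cref{lem:dismantlable-intervals-stay-dismantlable} supplies the diamond property, and finiteness supplies termination. The paper merely packages the confluence step globally (taking two minimal elements of $DI(\UL)$, a minimal common ancestor $\underline{T}$, and iterating the lemma to remove the union of both interval sequences from $\underline{T}$) rather than via local confluence plus Newman's Lemma, so the two arguments differ only in bookkeeping, not in substance.
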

\begin{proof}
	Let $\underline{U},\underline{V}\in DI(\UL)$ be two minimal elements in $DI(\UL)$.
	Then, there is a minimal upper bound $\underline{T}\in DI(\UL)$ of $\underline{U}$ and $\underline{V}$, \ie, both $\underline{U}$ and $\underline{V}$ are obtained by removing dismantling intervals from $\underline{T}$.
	Hence, there are two sequences of intervals $\US_1,...,\US_k,\underline{R}_1,...,\underline{R_l}$ such that $\underline{U}=\underline{T}\setminus(\US_1\cup...\cup\US_k)=(\underline{T}\setminus\US_1)\ldots\setminus\US_k$ and $\underline{V} = \underline{T}\setminus (\underline{R}_1\cup...\cup\underline{R_l})=(\underline{T}\setminus\underline{R_1})\ldots\setminus\underline{R_l}$.
  By iterative application of \cref{lem:dismantlable-intervals-stay-dismantlable} we have that $\underline{T}\setminus (\US_1\cup...\cup\US_k\cup\underline{R}_1\cup...\cup\underline{R_l} )\in DI(\UL)$.
 \qed
\end{proof}

\section{Dismantling in the Formal Context}
\label{sec:dismantling-contexts-and-arrow-relations}
In this section we show that dismantling intervals can be identified directly in the formal context. 
Based on this, we propose an algorithm to find all dismantling intervals for a given formal context.
Thus, we can omit the (expensive) computation of the concept lattice.
To this end, we make use of the arrow relations from FCA:

\begin{definition}[Def. 25 \cite{fca-book}]
	Let $\K=\GMI$ be a formal context. 
	For an object $g\in G$ and an attribute $m\in M$ we write 
	\begin{align*}
	g\swarrow m~ & :\Leftrightarrow &&\begin{cases}
	(g,m)\not\in I \text{ and }\\
	\text{if } \exists \  h\in G \text{ with } g'\subseteq h' \text{ and } g'\not =h', \text{ then } (h,m)\in I,
	\end{cases}\\
	g\nearrow m~ & :\Leftrightarrow &&\begin{cases}
	(g,m)\not\in I \text{ and }\\
	\text{if } \exists\  n\in M \text{ with } m'\subseteq n' \text{ and } m'\not =n', \text{ then } (g,n)\in I,
	\end{cases}\\
	g\neswarrow m~ & :\Leftrightarrow &&g\swarrow m \text{ and } g\nearrow m.
	\end{align*}
\end{definition}

We now adapt the definitions of $(\cdot)^{\star},(\cdot)_{\star}$ to object and attribute concepts,
\begin{align*}
  (\gamma g)_{\star} &\coloneqq \bigvee \{c \in \BB(\K)\ |\ c \le \gamma g\},~\text{and}\\
(\mu m)^{\star} &\coloneqq \bigwedge \{c \in \BB(\K)\ |\ \mu m \le c\},
\end{align*}
in order to characterize the arrow relations in a formal context as follows~\cite{fca-book}:
\begin{align*}
g\swarrow m &\Leftrightarrow \gamma g \wedge \mu m = (\gamma g)_{\star}\not= \gamma g\\
g\nearrow m &\Leftrightarrow \gamma g \vee \mu m = (\mu m)^{\star}\not= \mu m
\end{align*}
If $(\gamma g)_{\star}\not= \gamma g$ then the object $g$ is irreducible in the formal context and $\gamma g$ is supremum-irreducible in the corresponding concept lattice.
Analogously, if $(\mu m)^{\star}\not= \mu m$ then the attribute $m$ is irreducible in the formal context and $\mu m$ is infimum irreducible in the concept lattice.
Therefore, the arrow relations can be utilized to identify the irreducible concepts of a lattice~\cite[Lemma 13]{fca-book}.
This statement can be adapted to sublattices and their corresponding subcontexts, in particular to filters and ideals.

\begin{lemma}\label{lem:irred-iff-updownarrow-in-cxt}
	Let $\K=\GMI$ with $g\in G$ and $m\in M$.
	Then:
	\begin{enumerate}
		\item $\gamma g$ supremum irreducible in $(\mu m ]$ $\Leftrightarrow$
		$\exists n\in M: g\neswarrow n$ in 
		$\K|_{m',M}$(clarified)
		\item $\mu m$ supremum irreducible in $[\gamma g )$ $\Leftrightarrow$
		$\exists h\in G: h\neswarrow m$ in $\K|_{G,g'}$ 
		(clarified).
	\end{enumerate}
\end{lemma}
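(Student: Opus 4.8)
The plan is to reduce both equivalences to a single fact about an arbitrary clarified context $\K'$: an object concept $\gamma g$ of $\K'$ is supremum-irreducible if and only if $g\neswarrow n$ for some attribute $n$; one then applies this to a subcontext whose concept lattice is the relevant ideal or filter. For (1), recall from \cref{sec:fca-basics} that the principal ideal $(\mu m]$ is the sublattice of $\UBB(\K)$ whose object set is $\bigcup\{A\mid(A,B)\le\mu m\}=m'$ and whose attribute set is $M$; one checks that $\bigcup\{A\times B\mid(A,B)\le\mu m\}=I\cap(m'\times M)$, so $(\mu m]=\UBB(\K|_{m',M})$, and clarifying $\K|_{m',M}$ leaves its concept lattice unchanged. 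We may assume $(g,m)\in I$, since otherwise $\gamma g\not\le\mu m$, so $\gamma g\notin(\mu m]$ and $g$ is not an object of $\K|_{m',M}$, and both sides are vacuous. Granting $(g,m)\in I$, the concept $\gamma g$ lies in $(\mu m]$ and is precisely the object concept of $g$ inside $\K|_{m',M}$; hence ``$\gamma g$ supremum-irreducible in $(\mu m]$'' is exactly ``the object concept of $g$ is supremum-irreducible in $\UBB(\K|_{m',M})$'', and the fact above (applied with $\K'=\K|_{m',M}$, clarified) gives (1).

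To prove the fact, ``$\Leftarrow$'' is immediate: by the arrow characterization recalled in the text, $g\swarrow n$ gives $(\gamma g)_{\star}=\gamma g\wedge\mu n\ne\gamma g$, so $\gamma g$ is supremum-irreducible. For ``$\Rightarrow$'', assume $\gamma g$ is supremum-irreducible, so $(\gamma g)_{\star}$ is its unique lower neighbor, and pick a maximal element $c$ of the non-empty set $\{x\in\BB(\K')\mid(\gamma g)_{\star}\le x,\ \gamma g\not\le x\}$. Maximality forces $c$ to be infimum-irreducible: otherwise $c=\bigwedge\{x\mid x>c\}$, where each such $x$ satisfies $\gamma g\le x$ by maximality of $c$, so $\gamma g\le c$, contradicting $c$'s membership in the set. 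Since every infimum-irreducible concept is an attribute concept, $c=\mu n$ for some $n\in M$. From $\gamma g\not\le\mu n$ we get $\gamma g\wedge\mu n<\gamma g$, hence $\gamma g\wedge\mu n=(\gamma g)_{\star}$ (unique lower neighbor), while $(\gamma g)_{\star}\le\mu n$ gives the reverse inequality; thus $g\swarrow n$. Dually, $\mu n$ is infimum-irreducible with unique upper neighbor $(\mu n)^{\star}$; from $\gamma g\not\le\mu n$ we get $\gamma g\vee\mu n>\mu n$, hence $\gamma g\vee\mu n\ge(\mu n)^{\star}$, and by maximality of $c$ we have $\gamma g\le c^{\star}=(\mu n)^{\star}$, hence $\gamma g\vee\mu n=(\mu n)^{\star}$; thus $g\nearrow n$. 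Therefore $g\neswarrow n$.

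Statement (2) follows from (1) by the order-theoretic and object--attribute duality of FCA---interchanging $\vee$ with $\wedge$, ideals with filters, $\gamma$ with $\mu$, and $\swarrow$ with $\nearrow$---applied to the subcontext $\K|_{G,g'}$, whose concept lattice is the filter $[\gamma g)$ and in which $\mu m$ appears as the attribute concept of $m$ (again under $(g,m)\in I$). I expect the crux to be the ``$\Rightarrow$'' direction of the underlying fact: merely having some $n$ with $g\swarrow n$ already characterizes supremum-irreducibility, so to land on the \emph{double} arrow $\neswarrow$ one must choose $n$ with $\mu n$ sitting \emph{maximally} below $\gamma g$ over $(\gamma g)_{\star}$, and then observe that this one maximality condition delivers simultaneously the infimum-irreducibility of $\mu n$ and the covering relations required for both $g\swarrow n$ and $g\nearrow n$. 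The subcontext identifications and the role of clarification are routine by comparison.
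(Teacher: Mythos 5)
Your proposal is correct and follows essentially the same route as the paper: both reduce the claim to the identifications $\UBB(\K|_{m',M})=(\mu m]$ and $\UBB(\K|_{G,g'})=[\gamma g)$ and then invoke the characterization of supremum-irreducible object concepts by double arrows in a clarified context. The only difference is that the paper simply cites \cite[Lemma 13]{fca-book} for that characterization, whereas you re-prove it from scratch (correctly, by choosing a maximal concept above $(\gamma g)_{\star}$ that is not above $\gamma g$ and showing it is an attribute concept $\mu n$ with $g\neswarrow n$).
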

\begin{proof}
  This follows directly from \cite[Lemma 13]{fca-book} with $\UBB(\K|_{m',M})=(\mu m]$ and $\UBB(\K|_{G,g'})=[\gamma g)$.\qed
\end{proof}

Based on this equivalence, we propose a characterization for supremum-prime and infimum-prime concepts in the formal context.

\begin{proposition}\label{prop:dismantling-in-the-context}
  Let $\K=\GMI$ be a formal context, $g\in G$, $m\in M$, then
  \begin{enumerate}
  \item   $\gamma g$ is supremum prime in $(\mu m]$ if and only if
  \begin{enumerate}[label=\roman*)]
    \item $\exists n\in M: g\neswarrow n$ in $\K|_{m',M}$ (attribute-clarified) and
    \item $\nexists n\not= k \in M: g \nearrow k$ in $\K|_{m',M}$ (attribute-clarified).
  \end{enumerate}
    
\item $\mu m$ is infimum prime in $[\gamma g)$ if and only if
  \begin{enumerate}[label=\roman*)]
	  \item $\exists h\in G: h\neswarrow m$ in $\K|_{G,g'}$ (object-clarified) and
	  \item $\nexists h\not= o \in M: o \nearrow m$ in $\K|_{G,g'}$ (object-clarified).
    \end{enumerate}
  \end{enumerate}
\end{proposition}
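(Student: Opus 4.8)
The plan is to treat the two statements symmetrically — proving statement~(1) and reading off~(2) by duality — and to reduce everything to working inside the interval $(\mu m]$, whose concept lattice is $\UBB(\K|_{m',M})$. First I would fix notation: write $\mathbb{L} \coloneqq \K|_{m',M}$ (attribute-clarified) and recall from the discussion preceding \cref{lem:irred-iff-updownarrow-in-cxt} that $\UBB(\mathbb{L}) = (\mu m]$, so $\gamma g$ (the object concept of $g$ in $\K$, which is also the object concept of $g$ in $\mathbb{L}$ since $g \in m'$) is the element whose supremum-primeness we must characterize. The key structural fact I would invoke is that supremum-primeness of an element $c$ in a finite lattice is equivalent to the conjunction of: (a) $c$ is supremum-irreducible, and (b) $c_*$ (the unique lower neighbor of $c$) has at most one... — more precisely, that a supremum-irreducible $c$ fails to be supremum-prime exactly when there exist $x, y \not\geq c$ with $x \vee y \geq c$, i.e. $x \vee y \geq c$ forces $x \vee y > c_*$ in a way that cannot be "seen" by a single lower cover. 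I would make this precise as a lattice-theoretic lemma: in a finite lattice, $c$ is supremum-prime iff $c$ is supremum-irreducible and $c_*$ is meet-irreducible... no — the cleanest route is: $c$ supremum-irreducible is supremum-prime iff the interval $(c_*]$ — equivalently the set of elements not above $c$, which forms an ideal with top $c_*$ only when $c$ is supremum-prime — actually has the property that $\{x : x \not\geq c\}$ is an ideal. So statement~(1)(i) encodes supremum-irreducibility of $\gamma g$ in $(\mu m]$ via \cref{lem:irred-iff-updownarrow-in-cxt}(1), and statement~(1)(ii) must encode that the down-set complement $\{x \in (\mu m] : x \not\geq \gamma g\}$ is closed under joins.

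Concretely, for the forward direction of~(1): assuming $\gamma g$ supremum-prime in $(\mu m]$, it is supremum-irreducible there, so (i) holds by \cref{lem:irred-iff-updownarrow-in-cxt}(1). For (ii), I would argue contrapositively: if there were two distinct attributes $k \neq n$ of $\mathbb{L}$ with $g \nearrow k$ and $g \nearrow n$ (reading (ii) as: no two distinct up-arrows from $g$), then $\mu k$ and $\mu n$ are two distinct attribute concepts lying "just above" $\gamma g$ in the relevant sense — each satisfies $\gamma g \vee \mu k = (\mu k)^\star \neq \mu k$, i.e. $\gamma g \vee \mu k$ covers $\mu k$ — and from this one builds a join $\mu k \vee \mu n$ (or suitable elements below $\mu k$, $\mu n$) witnessing the failure of supremum-primeness. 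For the converse: assuming (i) and (ii), I get supremum-irreducibility from (i), so $(\gamma g)_* \prec \gamma g$ is well-defined in $(\mu m]$; then I must show $\{x : x \not\geq \gamma g\}$ is join-closed. If it weren't, there would be $x, y$ below... with $x \vee y \geq \gamma g$ but $x, y \not\geq \gamma g$; pushing up, one finds two distinct attribute concepts $\mu k, \mu n$ above $x, y$ respectively with $g \nearrow k$, $g \nearrow n$ and $k \neq n$ (they must be distinct, else $\mu k = \mu n \geq x \vee y \geq \gamma g$ would give a single up-arrow forcing $\gamma g \leq \mu k$, contradicting $x \not\geq \gamma g$ via the arrow characterization) — contradicting (ii).

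The main obstacle I anticipate is making the "push up to attribute concepts" step rigorous and matching it exactly to the arrow relation $g \nearrow k$ as defined — in particular handling the attribute-clarification (several attributes may share a column, so "$k \neq n$" must be understood up to clarification, and the condition "$\nexists n \neq k \in M : g \nearrow k$" as written is slightly delicate because $g \nearrow n$ appears under the existential binder for $n$ and then $g \nearrow k$ is asserted; I would read (ii) as "there do not exist two distinct attributes $k, n$ in the clarified context with $g \nearrow k$ and $g \nearrow n$," i.e. at most one up-arrow from $g$, and note that $g$ always has at least one up-arrow when $\gamma g \neq \top$ in $(\mu m]$). I would also need the auxiliary fact that in $(\mu m]$ the join of all $x \not\geq \gamma g$ equals $(\gamma g)_*$ precisely when $\gamma g$ is supremum-prime, which is the finite-lattice restatement of primeness I rely on throughout; I would isolate this as a short preliminary observation before the main argument. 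Statement~(2) then follows by the order-dual argument applied to $[\gamma g)$ with $\UBB(\K|_{G,g'}) = [\gamma g)$ and \cref{lem:irred-iff-updownarrow-in-cxt}(2), exchanging the roles of $\swarrow$ and $\nearrow$, objects and attributes, and join and meet. \qed
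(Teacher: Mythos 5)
Your plan follows essentially the same route as the paper's proof: condition (i) encodes supremum-irreducibility via \cref{lem:irred-iff-updownarrow-in-cxt}, and both directions of the primeness claim are handled contrapositively through the ``two distinct up-arrows'' criterion, with the backward direction pushing the witnesses $x,y$ up to maximal attribute concepts not above $\gamma g$ exactly as the paper does. One side remark is inaccurate --- the join of all $x\not\geq\gamma g$ in $(\mu m]$ need not equal $(\gamma g)_{\star}$ even when $\gamma g$ is supremum-prime, only that this join is itself not above $\gamma g$ --- but your main argument only uses the (correct) join-closedness formulation, so this does not affect the proof.
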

\begin{proof}
	We show the first part of the statement:
	
  ``$\Leftarrow$'': 
  We show this by contraposition.
  Assume $\gamma g$ is not supremum prime in $(\mu m ]$ but is supremum irreducible (otherwise use \Cref{lem:irred-iff-updownarrow-in-cxt}).
  Hence, $\exists c_1,c_2\in (\mu m ], c_1\not= c_2, \gamma g \not \leq c_1, c_2, \gamma g \leq c_1 \vee c_2$.
  Let $c_i=\mu k_i$ with $k_i\in M$ such that there is no $l\in M$ with $\mu l > \mu k_i$ and $\gamma g \not \leq \mu l$, \ie, choose maximal attribute concepts not larger than $\gamma g$.

  We show that $g\nearrow k_i$ using the characterization $\gamma g \vee \mu k_i = (\mu k_i)^{\star} \not= \mu k_i$.
  The second part, $(\mu k_i)^{\star} \not= \mu k_i$, is fulfilled by choice of $\mu k_i$.
  If we assume that $\gamma g \vee \mu k_i \not= (\mu k_i)^{\star}$, then $(\mu k_i)^{\star} < \gamma g \vee \mu k_i$ and thus there exists some $l\in M$ with $l\in\intent({(\mu k_i)^{\star}}), l \not\in\intent({\gamma g}), l\in\intent({\mu k_i})$\contradiction.
  Hence, $g\nearrow k_i$ in $\K|_{m',M}$.

  ``$\Rightarrow$'':
	We show this by contraposition.
  Assume $k,n\in M$, $k\not= n$, $g\nearrow n$, $g\nearrow k$ in $\K|_{m',M}$ (clarified).
  From $g\nearrow n$ we have $\gamma g \vee \mu n = (\mu n)^{\star}\not= \mu n$ and thus $\mu n \not \geq \gamma g$.
  Analogously, $g\nearrow k$ implies $\gamma g \vee \mu k = (\mu k)^{\star}\not= \mu k$ and thus $\mu k \not \geq \gamma g$.
  Since $\gamma g \vee \mu k = (\mu k)^{\star}$ and $\gamma g \vee \mu n = (\mu n)^{\star}$, we have $\mu k \not\geq \mu n$ and $\mu n \not \geq \mu k$.
  Thus, we have $\mu k \vee \mu n \geq \gamma g$.
  
  The second part of the statement can be shown analogously.\qed
\end{proof}

Now, two questions arise. First, given we have a formal context $\K$ and an interval $[\gamma g, \mu m]$ between an object concept and an attribute concept, is this interval dismantling in $\UBB(\K)$? And second, given a formal context $\K$, which are the dismantling intervals in the corresponding concept lattice $\UBB(\K)$?

To answer the first question \Cref{prop:dismantling-in-the-context} tells us that it suffices to check the arrow relations of $g$ in $\K|_{m',M}$ and of $m$ in $\K|_{G,g'}$:
If $g$ only has a single $\neswarrow$ in $\K|_{m',M}$ and no additional $\nearrow$, then $\gamma g$ is supremum-prime in $(\mu m]$. Analogously, if $m$ only has a single $\neswarrow$ in $\K|_{G,g'}$ and no additional $\swarrow$, then $\mu m$ is infimum-prime in $[\gamma g)$.
If both conditions hold, then the interval $[\gamma g, \mu m]$ is dismantling in $\UBB(\K)$.
Note that, if $\gamma g \not\leq \mu m$ then $g\not\in \K|_{m',M}$ and  $m\not\in \K|_{G,g'}$.

\begin{algorithm}[t]
  \SetAlgoLined
  \LinesNumbered
  \SetAlgoNoEnd
  \DontPrintSemicolon
   
\KwIn{A formal context $\K=\GMI$}
\KwResult{The set of all dismantling intervals for $\K$.}
\BlankLine 
$U=\emptyset$, 
$O=\emptyset$\;
\For{$g\in G$}{
  compute $\K|_{G,g'}$ and clarify objects\;
  compute $\nearrow(\K|_{G,g'})=\{(h,m)\ |\ h\nearrow m \text{ in } \K|_{G,g'}\}$\;
  \For{$m\in g'$}{
    $H_m= (G\times \{m\})\cap \nearrow(\K|_{G,g'})$\;
    \If{$H_m=\{(h,m)\}$ and $h\swarrow m$ in $\K|_{G,g'}$}{
      $U= U\cup \{(g,m)\}$\;
    }
  }
} 
\For{$m\in M$}{
  compute $\K|_{m',M}$ and clarify attributes\;
  compute $\swarrow(\K|_{m',M})=\{(g,n)\ |\ g\swarrow n \text{ in } \K|_{m'.M}\}$\;
  \For{$g\in m'$}{
    $N_g= (\{g\}\times M)\cap \swarrow(\K|_{m',M})$\;
    \If{$N_g=\{(g,n)\}$ and $g\nearrow n$ in $\K|_{m',M}$}{
      $O = O\cup \{(g,m)\}$\;
    }
  }
}
\Return $\{[\gamma g,\mu m] \ | \ (g,m) \in O\cap U\}$\;

\caption{Compute All Dismantling Intervals}
\label{alg:compute-all-dismantling-intervals} 
\end{algorithm}

In order to compute all dismantling intervals for a given formal context the naive approach is to check all intervals $[\gamma g, \mu m]$ between object concepts and attribute concepts.
However, this (essentially iterative) approach results in the repeated computation of the same subcontexts.
To prevent this, we instead compute each subcontext only once and for each object concept $\gamma g$ we check which attribute concepts are infimum-prime in $[\gamma g)$, \ie, we check the arrow relations in $\K|_{G,g'}$, and vice versa.
More precisely, for each object $g$ we take the attributes $m$ where $\mu m$ is infimum-prime in $[\gamma g)$ and collect them in the set
$$U= \{ (g,m)\ |\ g\in G, \mu m \text{ infimum-prime in } [\gamma g) \}.$$
Similarly, for each attribute $m$ we take the objects $g$ where $\gamma g$ is supremum-prime in $(\mu m]$ and collect them in the set
$$O= \{ (g,m)\ |\ m\in M, \gamma g \text{ supremum-prime in } (\mu m] \}.$$
If a pair $(g,m)$ is in both $U$ and $O$, then the respective interval $[\gamma g, \mu m]$ is dismantling for the lattice $\UBB(\K)$.
Note that it suffices to consider the reduced formal context.
In \Cref{alg:compute-all-dismantling-intervals} we present an implementation in pseudo-code.

If we are interested in the dismantling intervals of a lattice $\UL$, we can simply compute them for its standard context, \ie, for the context $(J(\UL),M(\UL),\leq)$ where $J(\UL)$ are the supremum-irreducible, and $M(\UL)$ the infimum-irreducible elements of $\UL$.

\section{Conclusion}
\label{sec:conclusion}
In this paper, we introduce the notion of dismantling intervals for a lattice in order to transfer the notion of dismantling doubly irreducible elements to a set of elements.
In particular, we show the connection between closed subrelations on the context side and dismantling intervals on the lattice side, and more generally the connection between closed subcontexts and quasi-dismantling intervals.
While a lattice can always be shrunk to the trivial empty lattice by removing a quasi-dismantling interval, iteratively removing only dismantling intervals for a lattice results in a unique (not necessarily trivial) smallest sublattice.
The dismantling intervals can be found directly in the formal context $\K$ with help of the arrow relations.
We show how to decide in $\K$ if a given interval is dismantling for $\UBB(\K)$.
Additionally, given $\K$, we propose an algorithm to compute all dismantling intervals of $\UBB(\K)$ without first computing the concept lattice itself.

\subsection*{Acknowledgements}
We thank Bernhard Ganter for the helpful discussions and fruitful suggestions. 

\bibliographystyle{splncs04}
\bibliography{paper}

\commentout{
  \todo{Further Work:
    do experiments and see how many DI exists in some example contexts and maybe random contexts.
    Look at the size distributions of the DIs.
    Look at the size of the DI-cores.}
  
\newpage
\appendix

%
%
\begin{figure}[t]
	\begin{minipage}{0.4\textwidth}
		\centering
      	{\unitlength 0.6mm
		
		\begin{picture}(80,60)%
		\put(0,0){%
			\begin{diagram}{80}{60}
			\Node{ 1}{30}{ 0}
			\Node{ 2}{10}{10}
			\Node{ 3}{30}{20}
      \Node{ 4}{20}{30}
      \Node{ 5}{ 0}{30}
      \Node{ 6}{40}{30}     
      \Node{ 7}{80}{30}     
      \Node{ 8}{60}{40}     
      \Node{ 9}{70}{50}
      \Node{10}{30}{50}     
      \Node{11}{50}{60}   

      \Edge{1}{2}
      \Edge{1}{3}
      \Edge{1}{7}
      \Edge{2}{4}
      \Edge{2}{5}
      \Edge{3}{4}
      \Edge{3}{6}
      \Edge{4}{10}
      \Edge{5}{10}     
      \Edge{6}{10}
      \Edge{6}{8}
      \Edge{7}{8}
      \Edge{7}{9}
      \Edge{8}{11}      
      \Edge{9}{11}
      \Edge{10}{11}
      
      \leftObjbox{2}{3}{1}{3}
      \leftObjbox{5}{3}{1}{5}
      \rightObjbox{3}{3}{1}{4}
      \rightObjbox{7}{3}{1}{2}
      \rightObjbox{6}{3}{1}{6}      
      \rightObjbox{9}{3}{1}{1}
      
			\leftAttbox{5}{3}{1}{d}
	 	  \leftAttbox{10}{3}{1}{b}
			\rightAttbox{8}{3}{1}{e}
	 	  \rightAttbox{4}{3}{1}{c}
	 	  \rightAttbox{9}{3}{1}{a}
    \end{diagram}}
  
			\put(30,20){\ColorNode{gray}}
			\put(20,30){\ColorNode{gray}}
      
		\end{picture}}	
	\end{minipage}
	\begin{minipage}{0.4\textwidth}
		\centering
		\begin{cxt}%
			\att{a}%
			\att{b}%
			\att{c}%
      \att{d}
      \att{e}
			\obj{x....}{1} %
      \obj{x...x}{2} %
			\obj{.xxx.}{3} %
      \obj{.xc.x}{4} %
      \obj{.x.x.}{5} %
      \obj{.x..x}{6} %
		\end{cxt}
	\end{minipage}
  \caption{before dismantling}
\end{figure}

\begin{figure}[t]
	\begin{minipage}{0.4\textwidth}
		\centering
      	{\unitlength 0.6mm
		
		\begin{picture}(80,60)%
		\put(0,0){%
			\begin{diagram}{80}{60}
			\Node{ 1}{30}{ 0}
			\Node{ 2}{10}{10}
			\Node{ 3}{30}{ 0}
      \Node{ 4}{30}{ 0}
      \Node{ 5}{ 0}{30}
      \Node{ 6}{40}{30}     
      \Node{ 7}{80}{30}     
      \Node{ 8}{60}{40}     
      \Node{ 9}{70}{50}
      \Node{10}{30}{50}     
      \Node{11}{50}{60}   
 
      \Edge{1}{2}
      \Edge{1}{6}
      \Edge{1}{7}
      \Edge{2}{5}
      \Edge{5}{10}     
      \Edge{6}{10}
      \Edge{6}{8}
      \Edge{7}{8}
      \Edge{7}{9}
      \Edge{8}{11}      
      \Edge{9}{11}
      \Edge{10}{11}
      
      \leftObjbox{2}{3}{1}{3}
      \leftObjbox{5}{3}{1}{5}
      \rightObjbox{7}{3}{1}{2}
      \rightObjbox{6}{3}{1}{4,6}      
      \rightObjbox{9}{3}{1}{1}
      
			\leftAttbox{5}{3}{1}{d}
	 	  \leftAttbox{10}{3}{1}{b}
			\rightAttbox{8}{3}{1}{e}
      \leftAttbox{2}{3}{1}{c}
	 	  \rightAttbox{9}{3}{1}{a}
    \end{diagram}}
  
      
		\end{picture}}	
	\end{minipage}
	\begin{minipage}{0.4\textwidth}
		\centering
		\begin{cxt}%
			\att{a}%
			\att{b}%
			\att{c}%
      \att{d}
      \att{e}
			\obj{x....}{1} %
			\obj{x...x}{2} %
			\obj{.xxx.}{3} %
      \obj{.x..x}{4} %
      \obj{.x.x.}{5} %
      \obj{.x..x}{6} %
		\end{cxt}
	\end{minipage}
  \caption{after dismantling}

\end{figure}

\begin{figure}[t]
	\begin{minipage}{0.4\textwidth}
		\centering
      	{\unitlength 0.6mm
		
		\begin{picture}(80,60)%
		\put(0,0){%
			\begin{diagram}{80}{60}
			\Node{ 1}{30}{ 0}
			\Node{ 2}{10}{10}
			\Node{ 3}{30}{20}
      \Node{ 4}{20}{30}
      \Node{ 5}{ 0}{30}
      \Node{ 6}{40}{30}     
      \Node{ 7}{80}{30}     
      \Node{ 8}{60}{40}     
      \Node{ 9}{70}{50}
      \Node{10}{30}{50}     
      \Node{11}{50}{60}   

      \Edge{1}{2}
      \Edge{1}{3}
      \Edge{1}{7}
      \Edge{2}{4}
      \Edge{2}{5}
      \Edge{3}{4}
      \Edge{3}{6}
      \Edge{4}{10}
      \Edge{5}{10}     
      \Edge{6}{10}
      \Edge{6}{8}
      \Edge{7}{8}
      \Edge{7}{9}
      \Edge{8}{11}      
      \Edge{9}{11}
      \Edge{10}{11}
      
      \leftObjbox{2}{3}{1}{3}
      \leftObjbox{5}{3}{1}{5}
      \rightObjbox{3}{3}{1}{4}
      \rightObjbox{7}{3}{1}{2}
      \rightObjbox{6}{3}{1}{6}      
      \rightObjbox{9}{3}{1}{1}
      
			\leftAttbox{5}{3}{1}{d}
	 	  \leftAttbox{10}{3}{1}{b}
			\rightAttbox{8}{3}{1}{e}
	 	  \rightAttbox{4}{3}{1}{c}
	 	  \rightAttbox{9}{3}{1}{a}
    \end{diagram}}

			\put(40,30){\ColorNode{gray}}
			\put(60,40){\ColorNode{gray}}
      
		\end{picture}}	
	\end{minipage}
	\begin{minipage}{0.4\textwidth}
		\centering
		\begin{cxt}%
			\att{a}%
			\att{b}%
			\att{c}%
      \att{d}
      \att{e}
			\obj{x....}{1} %
      \obj{x...x}{2} %
			\obj{.xxx.}{3} %
      \obj{.xx.x}{4} %
      \obj{.x.x.}{5} %
      \obj{.x..x}{6} %
		\end{cxt}
	\end{minipage}
  \caption{not dismantling}
\end{figure}

\begin{figure}[t]
	\begin{minipage}{0.4\textwidth}
		\centering
      	{\unitlength 0.6mm
		
		\begin{picture}(80,60)%
		\put(0,0){%
			\begin{diagram}{80}{60}
			\Node{ 1}{30}{ 0}
			\Node{ 2}{10}{10}
			\Node{ 3}{30}{20}
      \Node{ 4}{20}{30}
      \Node{ 5}{ 0}{30}
      \Node{ 6}{40}{30}     
      \Node{ 7}{80}{30}     
      \Node{ 8}{60}{40}     
      \Node{ 9}{70}{50}
      \Node{10}{30}{50}     
      \Node{11}{50}{60}   

      \Edge{1}{2}
      \Edge{1}{3}
      \Edge{1}{7}
      \Edge{2}{4}
      \Edge{2}{5}
      \Edge{3}{4}
      \Edge{3}{6}
      \Edge{4}{10}
      \Edge{5}{10}     
      \Edge{6}{10}
      \Edge{6}{8}
      \Edge{7}{8}
      \Edge{7}{9}
      \Edge{8}{11}      
      \Edge{9}{11}
      \Edge{10}{11}
      
      \leftObjbox{2}{3}{1}{3}
      \leftObjbox{5}{3}{1}{5}
      \rightObjbox{3}{3}{1}{4}
      \rightObjbox{7}{3}{1}{2}
      \rightObjbox{6}{3}{1}{6}      
      \rightObjbox{9}{3}{1}{1}
      
			\leftAttbox{5}{3}{1}{d}
	 	  \leftAttbox{10}{3}{1}{b}
			\rightAttbox{8}{3}{1}{e}
	 	  \rightAttbox{4}{3}{1}{c}
	 	  \rightAttbox{9}{3}{1}{a}
    \end{diagram}}
  
			\put(30,20){\ColorNode{gray}}
			\put(40,30){\ColorNode{gray}}
			\put(60,40){\ColorNode{gray}}
      
		\end{picture}}	
	\end{minipage}
	\begin{minipage}{0.4\textwidth}
		\centering
		\begin{cxt}%
			\att{a}%
			\att{b}%
			\att{c}%
      \att{d}
      \att{e}
			\obj{x....}{1} %
      \obj{x...x}{2} %
			\obj{.xxx.}{3} %
      \obj{.xx.x}{4} %
      \obj{.x.x.}{5} %
      \obj{.x..x}{6} %
		\end{cxt}
	\end{minipage}
  \caption{dismantling}
\end{figure}
}

\end{document}